\newtheorem{theorem}{Theorem}[section]
\newtheorem{lemma}[theorem]{Lemma}
\newtheorem{corollary}[theorem]{Corollary}
\newtheorem{proposition}[theorem]{Proposition}
\newtheorem{problem}[theorem]{Problem}
\newtheorem{remark}[theorem]{Remark}
\newtheorem{example}[theorem]{Example}
\theoremstyle{definition}
\newtheorem{definition}[theorem]{Definition}
\author{Didem G\"oz\"upek\affiliationmark{1}
  \and Ademir Hujdurovi\' c\affiliationmark{2,3}\thanks{This work is supported in part by the Slovenian Research Agency (research program P1-0285 and research projects N1-0032, N1-0038, N1-0062, and J1-7051).}
  \and Martin Milani\v c\affiliationmark{2,3}\thanks{This work is supported in part by the Slovenian Research Agency (I0-0035, research program P1-0285 and research projects N1-0032, J1-5433, J1-6720, J1-6743, and J1-7051).}}
\title[Minimal dominating sets and well-dominated lexicographic product graphs]{Characterizations of minimal dominating sets and the well-dominated property in lexicographic product graphs\thanks{The support of a bilateral research project between Slovenia and Turkey, financed by the Slovenian Research Agency (BI-TR/$14$--$16$--$005$) and T\"UB\.{I}TAK (grant no:213M620) is gratefully acknowledged.}}
\affiliation{
  Department of Computer Engineering, Gebze Technical University, Gebze, Kocaeli, Turkey\\
  University of Primorska, UP IAM, Muzejski trg 2, Koper, Slovenia\\
  University of Primorska, UP FAMNIT, Glagolja\v ska 8, Koper, Slovenia}
\keywords{lexicographic product of graphs, minimal dominating set, well-dominated graph, irreducible dominating set}
\begin{document}
\publicationdetails{19}{2017}{1}{25}{3209}
\maketitle
\begin{abstract}
  A graph is said to be well-dominated if all its minimal dominating sets are of the same size. The class of well-dominated graphs forms a subclass of the well studied class of well-covered graphs. While the recognition problem for the class of well-covered graphs is known to be co-NP-complete, the recognition complexity of well-dominated graphs is open.

In this paper we introduce the notion of an irreducible dominating set, a variant of dominating set generalizing both minimal dominating sets and minimal total dominating sets. Based on this notion, we characterize the family of minimal dominating sets in a lexicographic \hbox{product} of two graphs and derive a characterization of the well-dominated lexicographic product graphs. As a side result motivated by this study, we give a polynomially testable characterization of well-dominated graphs with domination number two, and show, more generally, that well-dominated graphs can be recognized in polynomial time in any class of graphs with bounded domination number. Our results include a characterization of dominating sets in lexicographic product graphs, which \hbox{generalizes} the expression for the domination number of such graphs following from works of Zhang et al.~(2011) and of \v{S}umenjak et al.~(2012).
\end{abstract}

\section{Introduction}

Variants of domination play an important role in graph theory. They give rise to theoretically interesting (and often difficult) problems and are widely applicable to model various real-life scenarios, see, e.g.~\cite{MR1605685,MR1605684}. The main subject of the present work is the study of domination in lexicographic product graphs. The works in the literature about domination in various product graphs have been mostly centered around the Cartesian product, where the focus has largely been on Vizing's conjecture (\cite{brevsar2012vizing}). For the lexicographic product graphs, various types of domination were investigated in the literature, including domination (\cite{MR1429806,vsumenjak2012roman}), total domination (\cite{MR2828067}), rainbow domination (\cite{vsumenjak2013rainbow}), Roman domination (\cite{vsumenjak2012roman}), and restrained domination (\cite{MR2828067}). In particular, the works
of~\cite{vsumenjak2012roman} and of~\cite{MR2828067} imply that the value of the domination number of a nontrivial lexicographic product of two graphs can be exactly determined as a function of the domination and total domination numbers of its factors.
In this paper, we expand on these results by studying the \hbox{(inclusion-)minimal} dominating sets and related notions in lexicographic product graphs.

One of the central notions for our study is that of {\it well-dominated graphs}. These are defined as graphs in which all minimal dominating sets have the same size. Well-dominated graphs form a subclass of the more widely studied class of well-covered graphs, defined as graphs in which all maximal independent sets are of the same size~(\cite{MR1254158}, \cite{MR1677797}). Well-dominated graphs were introduced by~\cite{finbow1988well}, who characterized well-dominated graphs of girth at least~$5$ as well as well-dominated bipartite graphs, and showed that within graphs of girth at least~$6$, well-dominated graphs coincide with the well-covered ones.
Not much work has been done on the subject since then. In particular, while the recognition problem for the class well-covered graphs was shown to be co-NP-complete (\cite{MR1161178},~\cite{MR1217991}), the recognition complexity of well-dominated graphs is not known. Characterizations of well-dominated graphs were obtained within the families of block graphs and unicyclic graphs by~\cite{topp1990well}, \hbox{$4$-connected} $4$-regular claw-free graphs by~\cite{gionet2011revision}, planar triangulations by~\cite{finbow2015triangulations}, and graphs without cycles of lengths $4$ and $5$ by~\cite{MR3648208}.

We introduce the notion of an irreducible dominating set, a variant of the notion of a dominating set that forms a common generalization of both minimal dominating sets and minimal total dominating sets (see Section~\ref{sec:irreducible}).
Irreducible dominating sets are important for the characterization of minimal dominating sets in a nontrivial lexicographic product of two graphs, which we develop in Section~\ref{sec:minimal} (Theorem~\ref{thm:MinDom}). Building on this characterization, we derive our main result: a characterization of the well-dominated nontrivial lexicographic product graphs (Theorem~\ref{thm:Well-dominated Lex}).
This characterization motivates the study of well-dominated graphs with domination number two, for which we develop a polynomially testable characterization in Section~\ref{sec:small-gamma} (Theorem~\ref{thm:characterization gamma=2}). More generally, using a connection with the well-known Hypergraph Transversal problem we show that well-dominated graphs can be recognized in polynomial time in any class of graphs with domination number bounded by a constant (Theorem~\ref{thm:const-k}). We conclude the paper with some open questions.

\section{Preliminaries}

All graphs in this paper will be finite, simple, and undirected. An {\em independent set} in a graph is a set of pairwise non-adjacent vertices.
An independent set is said to be {\em maximal} if it is not contained in any larger independent set.
The maximum size of an independent set in a graph $G$ is called the {\em independence number} of $G$ and denoted by $\alpha(G)$.
For $u\in  V(G)$, we denote by $N_G(u)$ the set of all neighbors of $u$ in $G$, and by $N_G[u]$ the closed neighborhood of $u$, that is,  $N_G[u]=\{u\}\cup N_G(u)$. For a set $S\subseteq V(G)$, we write $N_G(S)$ for the set $\cup_{v\in S}N_G(v)$, and
$N_G[S]$ for the set $\cup_{v\in S}N_G[v]$. Note that for every set $S$, we always have $S\subseteq N_G[S]$,
while $S$ is not necessarily a subset of $N_G(S)$.
A vertex $v$ in a graph $G$ is {\em isolated} if $N_G(v) = \emptyset$ and {\em universal} if $N_G[v] = V(G)$.
In all these notations, we may omit the index whenever the graph is clear from the context.
We denote the complement of a graph $G$ by $\overline{G}$ and a cycle on $n$ vertices by $C_{n}$. We denote by $2K_1$ the edgeless graph with exactly two vertices.

A set $D\subseteq V(G)$ is said to be a {\em dominating set} in $G$ if $N[D]=V(G)$; equivalently, if $D\cap N[v]\neq\emptyset$ for all $v\in V(G)$.
The minimum size of a dominating set of a graph $G$ is called the {\em domination number} of $G$ and denoted by $\gamma(G)$.
A {\em minimum} dominating set in $G$ is a dominating set of size $\gamma(G)$.
A {\em total dominating set} in $G$ is a set $D\subseteq V(G)$ such that $N(D) = V(G)$, that is, if every vertex of $G$ has a neighbor in $D$.
(Note that total dominating sets exist only in graphs without isolated vertices.)
The {\em total domination number} of $G$, denoted by $\gamma_t(G)$, is the minimum size of a total dominating set.
A dominating set (resp., a total dominating set) $D$ in $G$ is said to be {\em minimal} if it is minimal with respect to inclusion, that is, $D$ is a dominating set (resp., a total dominating set) that does not contain any smaller dominating set (resp., total dominating set) in $G$. The maximum size of a minimal dominating set of a graph $G$ is called the {\em upper domination number} of $G$ and denoted by $\Gamma(G)$.
A graph $G$ is said to be {\em well-covered} if all its maximal independent sets are of the same size, and {\em well-dominated} if all of its minimal dominating sets are of the same size, that is, if $\gamma(G)=\Gamma(G)$.

Let $S$ be a set of vertices in a graph $G$. For $x\in V(G)$, we say that $x$ is {\em dominated} by $S$ (or that $S$ {\em dominates} $x$) if $N[x]\cap S\neq \emptyset$. Moreover, we say that $x$ is {\em totally dominated} by $S$ (or that $S$ {\em totally dominates} $x$) if $N(x)\cap S\neq \emptyset$, and we say that $x$ is {\em barely dominated} by $S$ if
$x$ is dominated by $S$ and $x$ is not totally dominated by $S$ (or, equivalently, if
$N[x]\cap S = \{x\}$, that is, if $x$ is an isolated vertex in the subgraph of $G$ induced by $S$). For two sets of vertices $S$ and $S'$ in $G$, we say that $S$ {\em totally dominates} $S'$ if every vertex in $S'$ is totally dominated by $S$. In particular, a set $S\subseteq V(G)$ is a total dominating set in $G$ if and only if $S$ totally dominates $V(G)$. For graph theoretic terms not defined here, see, e.g.,~\cite{MR1367739}.

\subsection{Domination number of lexicographic product graphs}

The {\em lexicographic product} of two graphs $G$ and $H$ is the graph $G[H]$ (sometimes denoted also by $G\circ H$) with vertex set $V(G)\times V(H)$, where two vertices $(x_1,y_1)$ and $(x_2,y_2)$ are adjacent if and only if either $x_1x_2\in E(G)$ or $x_1 = x_2$ and $y_1y_2\in E(H)$.
The lexicographic product of two graphs is said to be \emph{nontrivial} if both factors have at least two vertices.
The {\em projection} of a given subset $D\subseteq V(G)\times V(H)$ to the graph $G$ is defined by
$p_G(D)=\{x\in V(G):  (x,y)\in D$ for some $y\in V(H)\}$. For a vertex $x\in p_G(D)$, we define $p_{H,x}(D)=\{y\in V(H): (x,y)\in D\}$. Note that every set $D\subseteq V(G[H])$ can be expressed as the disjoint union
\begin{equation}\label{eq:decomposition-D}
  D = \bigcup_{x\in p_G(D)}(\{x\}\times p_{H,x}(D))\,.
\end{equation}
For further background on the lexicographic product of graphs, see, e.g.,~\cite{MR2817074}.

Several papers in the literature studied the value of the domination number of a nontrivial lexicographic product of two graphs and determined the exact value in special cases, altogether giving the complete answer. Clearly, if $v$ is an isolated vertex in $G$, then $G[H]$ is isomorphic to the disjoint union of graphs $H$ and $(G-v)[H]$; in particular, this implies that $\gamma(G[H]) = \gamma(H) + \gamma((G-v)[H])$. It therefore suffices to consider the case when $G$ has no isolated vertices. \cite{MR2828067}~showed that if $\gamma(H) = 1$, then $\gamma(G[H]) = \gamma(G)$. In 2012, \v{S}umenjak et al.~showed that a nontrivial lexicographic product $G[H]$ of a connected graph $G$ and a connected graph $H$ with $\gamma(H)\ge 2$ satisfies $\gamma(G[H])=\gamma_t(G)$~\cite[Lemma 3.3]{vsumenjak2012roman}. By analyzing the proof of this result, it can be seen that the same equality holds whenever $G$ has no isolated vertices and $\gamma(H)\ge 2$. Therefore, the value of the domination number of the nontrivial lexicographic product of two graphs $G$ and $H$ such that $G$ is without isolated vertices is completely determined, as follows.

\begin{theorem}[combining results from~\cite{vsumenjak2012roman} and~\cite{MR2828067}]\label{thm:Gamma 2+}
If $G$ is a graph without isolated vertices and $H$ is any graph, then
$$\gamma(G[H])=\left\{
                 \begin{array}{ll}
                   \gamma(G), & \hbox{if $\gamma(H)= 1$;}\\
                   \gamma_t(G), & \hbox{if $\gamma(H)\geq 2$.}
                 \end{array}
               \right.$$
\end{theorem}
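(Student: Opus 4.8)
The plan is to prove each case by establishing matching upper and lower bounds, where the upper bounds are obtained by lifting suitable dominating sets of $G$ to dominating sets of $G[H]$, and the lower bounds by projecting dominating sets of $G[H]$ back down to $G$ via the map $p_G$.

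For the upper bounds I would argue as follows. If $\gamma(H)=1$, then $H$ has a universal vertex $y^\ast$, and for any minimum dominating set $D_G$ of $G$ the set $D_G\times\{y^\ast\}$ dominates $G[H]$: a vertex $(a,b)$ is dominated through the first coordinate whenever $a$ has a $G$-neighbor in $D_G$, and otherwise $a\in D_G$ and $(a,y^\ast)$ dominates $(a,b)$ since $y^\ast$ is universal in $H$. This gives $\gamma(G[H])\le\gamma(G)$. If $\gamma(H)\ge 2$, then for any minimum total dominating set $T$ of $G$ and any fixed $y_0\in V(H)$, the set $T\times\{y_0\}$ dominates $G[H]$, because every $a\in V(G)$ has a $G$-neighbor in $T$ and hence every $(a,b)$ is dominated through the first coordinate; this gives $\gamma(G[H])\le\gamma_t(G)$.

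For the lower bounds, the starting point is the general observation that for any dominating set $D$ of $G[H]$ the projection $p_G(D)$ is a dominating set of $G$ (dominating $(a,b)$ forces $p_G(D)\cap N_G[a]\ne\emptyset$), so $\gamma(G[H])\ge\gamma(G)$; together with the first upper bound this settles the case $\gamma(H)=1$. The remaining inequality $\gamma(G[H])\ge\gamma_t(G)$ when $\gamma(H)\ge2$ is the crux. Here I would take a minimum dominating set $D$, set $P=p_G(D)$, and analyze the column decomposition \eqref{eq:decomposition-D}, writing $P_1$ for the vertices $x$ with $|p_{H,x}(D)|=1$ and $P_2$ for those with $|p_{H,x}(D)|\ge2$. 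The key structural facts are that every vertex of $V(G)\setminus P$ has a $G$-neighbor in $P$ (its column is empty, so it can only be dominated from another column), and that every vertex of $P_1$ also has a $G$-neighbor in $P$. This is where $\gamma(H)\ge2$ enters: it forbids universal vertices of $H$, so the single $D$-vertex of such a column cannot dominate the whole column, and outside help is needed. Consequently the only vertices of $G$ not totally dominated by $P$ lie in the set $P_2'\subseteq P_2$ of multi-vertex columns having no $G$-neighbor in $P$.

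Finally I would turn $P$ into a total dominating set by adding, for each $x\in P_2'$, one $G$-neighbor of $x$ (which exists since $G$ has no isolated vertices); call the result $T$. Every vertex of $G$ then has a neighbor in $T$, so $T$ is a total dominating set, and the counting works out because each column indexed by $P_2$ contributes at least $2$ to $|D|$ while we add at most one vertex for it: $\gamma_t(G)\le|T|\le|P_1|+2|P_2|\le|D|=\gamma(G[H])$. The main obstacle is precisely this last case, where one must extract a \emph{total} dominating set of $G$ from an ordinary dominating set of $G[H]$; the crucial leverage is that when $\gamma(H)\ge2$ every column not dominated from the outside is forced to contain at least two vertices of $D$, which is exactly what pays for repairing total domination.
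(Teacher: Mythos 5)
Your proof is correct and follows essentially the same route as the paper: identical lifted upper bounds ($D_G\times\{y^\ast\}$, resp.\ $T\times\{y_0\}$), the projection lower bound for $\gamma(H)=1$, and for $\gamma(H)\ge 2$ the same repair-and-count argument that turns $p_G(D)$ into a total dominating set of $G$ by adding one neighbor per deficient vertex and charging the additions to columns of size at least two. The only cosmetic difference is that you partition $p_G(D)$ by column size and use non-universality of vertices of $H$ directly, whereas the paper partitions into totally versus barely dominated vertices and invokes Proposition~\ref{lem:Dom} (your set $P_2'$ coincides with the paper's set $B$ of barely dominated projection vertices, so the two countings are the same).
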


It is worth mentioning that these works were preceded by the observation that for every graph $G$ without isolated vertices, we have $\gamma(G[2K_1])= \gamma_t(G)$. This relation was noted by Kratsch and Stewart in 1997, who used it to reduce the total dominating set problem to the dominating set problem (\cite{MR1474143}). The proof of equality $\gamma(G[2K_1])= \gamma_t(G)$ is implicit in the proof of Lemma~2 from~\cite{MR1474143}.

Let us also remark that Theorem~6 from~\cite{Sitthiwirattham} regarding the value of the domination number of the lexicographic product of two graphs where the base graph is complete is not true. The theorem states that for every connected graph $H$, we have $\gamma(K_n[H])=\gamma(H)$. Theorem~\ref{thm:Gamma 2+} contradicts this. Namely, if $H$ is a graph with $\gamma(H)\geq 2$ and $n\ge 2$, then by Theorem~\ref{thm:Gamma 2+}, we have $\gamma(K_n[H])=\gamma_t(K_n)=2$. Hence, for every connected graph $H$ with $\gamma(H)\geq 3$ and every $n\ge 2$ the equality $\gamma(K_n[H])=\gamma(H)$ stated in Theorem 6 in~\cite{Sitthiwirattham} is false.

For the sake of completeness, we give in the following corollary a formula for the domination number of the lexicographic product of any two graphs.

\begin{corollary}\label{cor:isolated}	
Let $G$ and $H$ be any two graphs and let $I$ be the set of isolated vertices in $G$.
Then
$$\gamma(G[H])=\left\{
                 \begin{array}{ll}
                   |V(G)|\gamma(H), & \hbox{if $G$ is edgeless;}\\
                   \gamma(G), & \hbox{if $G$ has an edge and $\gamma(H)= 1$;}\\
                   \gamma_t(G-I)+|I|\gamma(H), & \hbox{if $G$ has an edge and $\gamma(H)\geq 2$.}
                 \end{array}
               \right.$$
\end{corollary}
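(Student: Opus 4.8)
The plan is to reduce the general statement to Theorem~\ref{thm:Gamma 2+} by isolating the contribution of the isolated vertices of $G$. First I would handle the degenerate case in which $G$ is edgeless: then every vertex of $G$ is isolated, and $G[H]$ is the disjoint union of $|V(G)|$ copies of $H$, so $\gamma(G[H]) = |V(G)|\,\gamma(H)$ follows immediately from the additivity of the domination number over connected components (equivalently, over disjoint unions). This disposes of the first case and lets me assume from now on that $G$ has at least one edge.

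For the remaining two cases, the key structural observation is the one already recorded in the excerpt: if $v$ is an isolated vertex of $G$, then $G[H] \cong H \sqcup (G-v)[H]$, whence $\gamma(G[H]) = \gamma(H) + \gamma((G-v)[H])$. Applying this repeatedly to strip off all vertices of the isolated set $I$ one at a time, I would obtain by a straightforward induction on $|I|$ that
\begin{equation*}
  \gamma(G[H]) = |I|\,\gamma(H) + \gamma\bigl((G-I)[H]\bigr)\,.
\end{equation*}
The graph $G-I$ has no isolated vertices by definition of $I$ (removing the isolated vertices of a graph with at least one edge cannot create new isolated ones, and $G-I$ is nonempty since $G$ has an edge), so Theorem~\ref{thm:Gamma 2+} applies to $(G-I)[H]$.

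Now I would split according to $\gamma(H)$. If $\gamma(H) = 1$, Theorem~\ref{thm:Gamma 2+} gives $\gamma((G-I)[H]) = \gamma(G-I)$; and since $G$ has an edge and $\gamma(H)=1$ means $H$ has a universal vertex, I should confirm that the isolated vertices contribute exactly $|I|\cdot 1 = |I|$ and that $|I|\,\gamma(H) + \gamma(G-I) = \gamma(G)$, because adjoining $|I|$ isolated vertices to $G-I$ raises its domination number by exactly $|I|$ (each isolated vertex must be dominated by itself). This recovers the stated value $\gamma(G)$ in the second case. If instead $\gamma(H) \ge 2$, Theorem~\ref{thm:Gamma 2+} gives $\gamma((G-I)[H]) = \gamma_t(G-I)$, and substituting into the displayed formula yields $\gamma(G[H]) = \gamma_t(G-I) + |I|\,\gamma(H)$, exactly the third case.

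I do not expect any serious obstacle here, since the corollary is essentially bookkeeping on top of Theorem~\ref{thm:Gamma 2+}. The one point demanding a little care is verifying, in the second case, that $|I|\,\gamma(H) + \gamma(G-I)$ collapses to $\gamma(G)$: this rests on the elementary fact that $\gamma(G) = \gamma(G-I) + |I|$ whenever $I$ is the set of isolated vertices of $G$, which I would state and justify explicitly rather than leave implicit. Everything else is a direct invocation of the disjoint-union decomposition and the already-established theorem.
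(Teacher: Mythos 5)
Your proposal is correct and takes essentially the same route as the paper's proof: both decompose $G[H]$ as the disjoint union of $(G-I)[H]$ and $|I|$ copies of $H$, apply Theorem~\ref{thm:Gamma 2+} to $G-I$ (which has no isolated vertices), and use the identity $\gamma(G)=\gamma(G-I)+|I|$ to collapse the $\gamma(H)=1$ case. Your vertex-by-vertex induction is merely a more verbose derivation of the disjoint-union identity that the paper states in one step, and your explicit justification of $\gamma(G)=\gamma(G-I)+|I|$ is the same fact the paper invokes without proof.
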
	

\begin{proof}	
Let $m = |I|$. If $G$ is edgeless, then $I = V(G)$ and $G[H]$ is isomorphic to the disjoint union of $m$ copies of $H$, hence
$\gamma(G[H]) = m\gamma(H)$.
Suppose now that $G$ has an edge. Setting $G' = G-I$, we observe
that the product $G[H]$ is isomorphic to the disjoint union of the product $G'[H]$
and $m$ copies of $H$, hence $\gamma(G[H]) = \gamma(G'[H]) + m\gamma(H)$.
By Theorem~\ref{thm:Gamma 2+}, we have
$$\gamma(G'[H])=\left\{
                 \begin{array}{ll}
                   \gamma(G'), & \hbox{if $\gamma(H)= 1$;}\\
                   \gamma_t(G'), & \hbox{if $\gamma(H)\geq 2$\,,}
                 \end{array}
               \right.$$
which implies
$$\gamma(G[H])=\left\{
                 \begin{array}{ll}
                   \gamma(G')+m, & \hbox{if $\gamma(H)= 1$;}\\
                   \gamma_t(G')+m\gamma(H), & \hbox{if $\gamma(H)\geq 2$\,,}
                 \end{array}
               \right.$$
Using the fact that $\gamma(G)=\gamma(G')+m$, the result follows.	
\end{proof}

\section{Reducible and irreducible dominating sets} \label{sec:irreducible}

An important notion for our characterization of minimal dominating sets of the lexicographic product graphs is the notion of irreducible dominating sets, which we now introduce.

\begin{definition}
A dominating set $D$ is said to be {\em reducible} if there exists a vertex $u\in D$ such that $D\setminus \{u\}$ is also a dominating set of $G$ and the sets of vertices that are totally dominated by $D$ and $D\setminus \{u\}$ coincide (that is, $N(D)=N(D\setminus \{u\})$). A dominating set is {\em irreducible} if it is not reducible.
\end{definition}

To illustrate these notions, we now list some easily verifiable facts and examples:
\begin{itemize}
\item A dominating set $D$ is reducible if and only if for some $u\in D$, the set $D\setminus \{u\}$ totally dominates $N[u]$.
\item Every minimal dominating set is irreducible. Consequently, every dominating set contains an irreducible dominating set.
\item Every minimal total dominating set is irreducible.
\item Suppose that $D$ is a dominating set inducing a subgraph of maximum degree at most one (that is, every vertex in $D$ has at most one neighbor in $D$). Then, $D$ is irreducible.
\item Consider a complete graph $K_n$ on $n$ vertices with $n\geq 3$. The only irreducible dominating sets in $K_n$ are those of size $1$ or $2$, that is, minimal dominating sets and minimal total dominating sets.
\end{itemize}

The next proposition compares a well-known characterization of minimal dominating sets with a characterization of irreducible dominating sets. In order to state the proposition, we need to introduce some more terminology. Given a set $D\subseteq V(G)$, a vertex $u\in D$ and a vertex $v\in V(G)$, we say that $v$ is a {\em $D$-private closed neighbor} of $u$ if $N[v]\cap D = \{u\}$.
Note that $u$ is a $D$-private closed neighbor of itself if and only if $u$ is barely dominated by $D$. If $u$ is totally dominated by $D$, then every $D$-private closed neighbor $v$ of $u$ is an element of $V(G)\setminus D$ that is not adjacent to any vertex in $D\setminus\{u\}$. Moreover, every vertex in $D$ with a unique neighbor in $D$ will be called a {\em $D$-leaf}.

\begin{proposition}\label{fact:irreducible}
For any graph $G$ and a dominating set $D\subseteq V(G)$, the following holds:
\begin{enumerate}
  \item $D$ is minimal if and only if every vertex in $D$ has a $D$-private closed neighbor.

  \item $D$ is irreducible if and only if every vertex in $D$ either has a $D$-private closed neighbor or is adjacent to a $D$-leaf.
\end{enumerate}
Consequently, $D$ is reducible if and only if it has a vertex that: (i) does not have any $D$-private closed neighbors and (ii) is not adjacent to any $D$-leaf.
\end{proposition}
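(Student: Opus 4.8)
The plan is to prove the two equivalences separately; the second is the substantive one, since the first is the classical characterization of minimal dominating sets and the concluding ``reducible'' statement follows immediately by negating the second equivalence. For part~(1), I would recall that a dominating set $D$ is minimal exactly when no single vertex can be removed while preserving domination. Removing $u$ destroys domination precisely when some vertex $v$ is dominated only through $u$, i.e.\ $N[v]\cap D=\{u\}$; thus $u$ cannot be removed iff $u$ has a $D$-private closed neighbor. Since $D$ is minimal iff no vertex can be removed, this gives the stated equivalence.

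For part~(2), I would unwind the definition of irreducibility via the first bulleted fact in Section~\ref{sec:irreducible}: $D$ is reducible iff there is a $u\in D$ such that $D\setminus\{u\}$ totally dominates $N[u]$. So $D$ is irreducible iff for every $u\in D$, the set $D\setminus\{u\}$ fails to totally dominate some vertex of $N[u]$. The key step is to analyze, for a fixed $u$, what it means for $D\setminus\{u\}$ to fail to totally dominate $N[u]$, and to show this failure is equivalent to ``$u$ has a $D$-private closed neighbor or $u$ is adjacent to a $D$-leaf.'' I would split the vertices $v\in N[u]$ into two cases. If $v=u$ or $v\notin D$: then $v$ is not totally dominated by $D\setminus\{u\}$ means $N(v)\cap(D\setminus\{u\})=\emptyset$, i.e.\ $N[v]\cap D\subseteq\{u\}$; combined with $v\in N[u]$ (so $u\in N[v]$ when $v=u$, or $u\in N[v]$ since $v\in N(u)$), this yields $N[v]\cap D=\{u\}$, exactly saying $v$ is a $D$-private closed neighbor of $u$. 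If instead $v\in D\setminus\{u\}$: then $v\in N(u)$, and $v$ not totally dominated by $D\setminus\{u\}$ means $v$'s only neighbor in $D$ is $u$, i.e.\ $v$ is a $D$-leaf adjacent to $u$.

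Putting these cases together, for a fixed $u$ there exists a vertex of $N[u]$ not totally dominated by $D\setminus\{u\}$ iff $u$ has a $D$-private closed neighbor or $u$ is adjacent to a $D$-leaf. Quantifying over all $u\in D$ gives that $D$ is irreducible iff every $u\in D$ satisfies this disjunction, which is precisely the claim of part~(2). The final ``Consequently'' sentence is then just the contrapositive: $D$ is reducible iff the universal statement fails, i.e.\ some $u\in D$ has neither a $D$-private closed neighbor nor a neighbor that is a $D$-leaf.

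I expect the main obstacle to be bookkeeping in the case analysis for part~(2), specifically handling the boundary case $v=u$ uniformly (where $u$ may itself be a $D$-private closed neighbor, corresponding to $u$ being barely dominated) and ensuring the two cases $v\notin D$ and $v\in D\setminus\{u\}$ together exhaust $N[u]$ and cleanly match the two disjuncts. Care is also needed to confirm that a $D$-private closed neighbor $v\ne u$ is automatically outside $D$, as noted in the terminology preceding the proposition, so that the two cases do not overlap.
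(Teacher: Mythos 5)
Your proof is correct, but it takes a genuinely different route from the paper's. The paper proves part~(2) directly from the definition of reducibility (existence of $u\in D$ with $D\setminus\{u\}$ dominating and $N(D)=N(D\setminus\{u\})$), via two contradiction arguments: in one direction it extracts a witness $v\in N(D)\setminus N(D\setminus\{u\})$, deduces $N(v)\cap D=\{u\}$, and uses the absence of $D$-private closed neighbors to force $v\in D$, making $v$ a $D$-leaf adjacent to $u$; in the other direction it shows a $D$-leaf adjacent to a removable $u$ would lie in $N(D)\setminus N(D\setminus\{u\})$. You instead route everything through the paper's first bulleted fact --- reducibility of $D$ is equivalent to $D\setminus\{u\}$ totally dominating $N[u]$ for some $u$ --- and then prove, for each fixed $u$, a pointwise biconditional by an exhaustive case split on $v\in N[u]$: the cases $v=u$ or $v\notin D$ yield exactly the $D$-private closed neighbors of $u$ (with the check that $u\in N[v]$ upgrades $N[v]\cap D\subseteq\{u\}$ to equality), while $v\in D\setminus\{u\}$ yields exactly the $D$-leaves adjacent to $u$. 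Your version buys a fully local, contradiction-free argument in which the two disjuncts of the characterization emerge naturally from the two cases; the paper's version buys self-containedness, since it works from the raw definition. The one caveat for your approach: the bulleted reformulation is stated in the paper without proof (as ``easily verifiable''), so for a self-contained write-up you should include its short verification --- in particular the direction where reducibility implies $D\setminus\{u\}$ totally dominates $N[u]$ needs the small observation that, since $u\notin D\setminus\{u\}$ and $D\setminus\{u\}$ is dominating, $u$ itself must have a neighbor in $D\setminus\{u\}$, while vertices of $N(u)$ are handled by $N(D)=N(D\setminus\{u\})$. Part~(1) and the concluding ``reducible'' sentence are handled the same way in both proofs.
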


\begin{proof}
Let $D$ be a dominating set in $G$. Suppose first that $D$ is minimal, let $u\in D$, and let $D' = D\setminus \{u\}$. If $u$ has no $D$-private closed neighbor, then for every vertex $v\in V(G)$, we have $N[v]\cap (D\setminus \{u\})\neq \emptyset$. Hence, $D'$ is a dominating set in $G$, contradicting the minimality of $D$. This shows that every vertex in $D$ has a $D$-private closed neighbor. Conversely, if every vertex $u\in D$ has a $D$-private closed neighbor, say $x_u$, then for every proper subset $D'\subset D$ and every $u\in D\setminus D'$,
we have $N[x_u]\cap D = \{u\}$, which implies $N[x_u]\cap D'\subseteq (N[x_u]\cap D)\setminus\{u\} = \emptyset$,
hence the set $D'$ does not dominate $x_u$. It follows that $D$ is a minimal dominating set.

Suppose next that $D$ is irreducible and suppose for the sake of contradiction that some vertex $u\in D$ does not have any $D$-private closed neighbor and is not adjacent to any $D$-leaf. Since $u$ does not have any $D$-private closed neighbor, the set $D\setminus \{u\}$ is also a dominating set of $G$. Thus, since $D$ is irreducible, the sets of vertices totally dominated by $D$ and $D\setminus \{u\}$ do not coincide, that is, there is a vertex $v\in N(D)\setminus (N(D\setminus \{u\}))$. Then $N(v)\cap D = \{u\}$ and since $u$ does not have any $D$-private closed neighbor, we infer that $v\in D$. It follows that $v$ is a $D$-leaf adjacent to $u$, contradicting the assumption that $u$ is not adjacent to any $D$-leaf.

Finally, suppose that every vertex in $D$ either has a $D$-private closed neighbor or is adjacent to a $D$-leaf and,
for the sake of contradiction, that $D$ is reducible. Then, there exists a vertex $u\in D$ such that
$D\setminus \{u\}$ is a dominating set of $G$ and $N(D)=N(D\setminus \{u\})$. The fact that
$D\setminus \{u\}$ is a dominating set implies that $u$ does not have a $D$-private closed neighbor,
and consequently $u$ is adjacent to some $D$-leaf, say $v$.
We thus have $N(v)\cap D = \{u\}$, which implies that $v\in N(D)\setminus (N(D\setminus \{u\}))$, contrary to the fact that
$N(D)=N(D\setminus \{u\})$. This completes the proof.
\end{proof}

\section{Minimal dominating sets in lexicographic product graphs}\label{sec:minimal}

In this section we investigate the structure of dominating sets and of minimal dominating sets in the lexicographic product of two graphs.

\subsection{Dominating sets}

The following lemma characterizes when a vertex in the lexicographic product graph is dominated by a given set.

\begin{lemma}\label{lem:dominating-vertices}
For graphs $G$ and $H$, a vertex $(g,h)\in V(G[H])$, and a set $D\subseteq V(G)\times V(H)$, the following conditions are equivalent:
\begin{enumerate}
  \item $D$ dominates $(g,h)$ in $G[H]$.
  \item Either $p_G(D)$ totally dominates $g$ in $G$ or $p_{H,g}(D)$ dominates $h$ in $H$.
\end{enumerate}
In particular, if $(g,h)$ is dominated by $D$ in $G[H]$, then $g$ is dominated by $p_G(D)$ in $G$.
\end{lemma}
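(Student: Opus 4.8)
The plan is to prove the equivalence by unwinding the definition of adjacency in the lexicographic product. Recall that $D$ dominates $(g,h)$ in $G[H]$ precisely when $N_{G[H]}[(g,h)]\cap D\neq\emptyset$, that is, when there exists some $(x,y)\in D$ with $(x,y)\in N_{G[H]}[(g,h)]$. By the definition of $G[H]$, a vertex $(x,y)$ lies in $N_{G[H]}[(g,h)]$ if and only if one of three mutually exclusive cases holds: $(x,y)=(g,h)$; or $x=g$ and $y\in N_H(h)$; or $xg\in E(G)$ (with $y$ arbitrary). The first two cases can be merged into the single condition ``$x=g$ and $y\in N_H[h]$'', while the third is ``$x\in N_G(g)$''. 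So $D$ dominates $(g,h)$ if and only if there is $(x,y)\in D$ with either $x\in N_G(g)$, or $x=g$ and $y\in N_H[h]$.

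First I would show that condition~(2) is an exact restatement of this disjunction in terms of the projections. The clause ``$p_G(D)$ totally dominates $g$ in $G$'' means $N_G(g)\cap p_G(D)\neq\emptyset$, which by the definition of $p_G(D)$ is equivalent to the existence of $(x,y)\in D$ with $x\in N_G(g)$; this matches the third case above. The clause ``$p_{H,g}(D)$ dominates $h$ in $H$'' requires $g\in p_G(D)$ so that $p_{H,g}(D)$ is defined, and means $N_H[h]\cap p_{H,g}(D)\neq\emptyset$; by the definition of $p_{H,g}(D)=\{y : (g,y)\in D\}$, this is equivalent to the existence of $(g,y)\in D$ with $y\in N_H[h]$, matching the merged first-and-second case. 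I would be slightly careful here about the convention that $p_{H,g}(D)$ is only defined when $g\in p_G(D)$: if $g\notin p_G(D)$ then there is no $(g,y)\in D$ at all, so the second disjunct is vacuously false, and one checks this causes no mismatch since in that case domination of $(g,h)$ can only come through the third case.

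For the ``in particular'' statement, I would argue directly from~(2). Suppose $(g,h)$ is dominated by $D$. By the equivalence just established, either $p_G(D)$ totally dominates $g$, i.e. $N_G(g)\cap p_G(D)\neq\emptyset$, in which case $g$ is (totally, hence also) dominated by $p_G(D)$; or $p_{H,g}(D)$ dominates $h$, which in particular requires $g\in p_G(D)$, and since $g\in N_G[g]$ this gives $N_G[g]\cap p_G(D)\ni g$, so again $g$ is dominated by $p_G(D)$. In both cases $N_G[g]\cap p_G(D)\neq\emptyset$, as desired.

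I do not expect any genuine obstacle here: the lemma is essentially a careful case analysis that translates adjacency in $G[H]$ into simultaneous conditions on the two coordinates. The only subtle point, and the one I would take care to state explicitly, is the handling of the degenerate situation $g\notin p_G(D)$, where the notation $p_{H,g}(D)$ is undefined and the second disjunct of~(2) must be read as false; keeping the three-case decomposition of the closed neighborhood straight (and remembering that the diagonal case $(x,y)=(g,h)$ folds into the $y\in N_H[h]$ clause rather than the $N_G(g)$ clause) is what makes the bookkeeping clean.
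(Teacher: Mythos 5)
Your proof is correct and follows essentially the same route as the paper's: both identify $N_{G[H]}[(g,h)] = (\{g\}\times N_H[h])\cup (N_G(g)\times V(H))$, translate the two disjuncts into the projection conditions ``$p_{H,g}(D)$ dominates $h$'' and ``$p_G(D)$ totally dominates $g$'', and obtain the final claim by observing that if $g$ is not totally dominated then $p_{H,g}(D)$ dominating $h$ forces $p_{H,g}(D)\neq\emptyset$, hence $g\in p_G(D)$. Your explicit handling of the degenerate case $g\notin p_G(D)$ is a minor refinement the paper leaves implicit (there the set is simply empty), but it does not change the argument.
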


\begin{proof}
The definition of the lexicographic product implies that the closed neighborhood of
a vertex $(g,h)$ in $G[H]$ is given by $N_{G[H]}[(g,h)] = (\{g\}\times N_H[h])\cup (N_G(g)\times V(H))$.
Hence, the condition that $D$ dominates $(g,h)$ in $G[H]$ is equivalent to the condition that
either $D\cap (\{g\}\times N_H[h])\neq \emptyset$ or $D\cap (N_G(g)\times V(H))\neq \emptyset$.
The former condition is equivalent to the condition that $p_{H,g}(D)\cap N_H[h]\neq \emptyset$
(that is, $p_{H,g}(D)$ dominates $h$ in $H$),
while the latter one is equivalent to the condition that $p_{G}(D)\cap N_G(g)\neq \emptyset$
(that is, $p_G(D)$ totally dominates $g$ in $G$).

It remains to show that if $(g,h)$ is dominated by $D$ in $G[H]$, then $g$ is dominated by $p_G(D)$ in $G$.
If $g$ is totally dominated by $p_G(D)$ in $G$, then it is also dominated.
If $g$ is not totally dominated by $p_G(D)$ in $G$, then by the above the set $p_{H,g}(D)$ dominates $h$ in $H$. In particular, $p_{H,g}(D)\neq \emptyset$, which implies that $g\in p_G(D)$. Hence, $g$ is dominated by $p_G(D)$ in $G$ also in this case.
\end{proof}

Next, we characterize the dominating sets in the lexicographic product of two graphs.
Recall that given a set $S\subseteq V(G)$, a vertex $x\in V(G)$ is said to be barely dominated by $S$ if
$N[x]\cap S = \{x\}$ (that is, if $x$ is an isolated vertex in the subgraph of $G$ induced by $S$).

\begin{sloppypar}
\begin{proposition}\label{lem:Dom}
For any two graphs $G$ and $H$, a set $D\subseteq V(G)\times V(H)$
is a dominating set in $G[H]$ if and only if
$p_G(D)$ is a dominating set in $G$ such that for every vertex $g\in V(G)$ that is barely dominated by $p_G(D)$,
the set $p_{H,g}(D)$ is a dominating set in $H$.
\end{proposition}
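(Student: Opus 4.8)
The plan is to derive the statement directly from Lemma~\ref{lem:dominating-vertices}, which for a single vertex $(g,h)$ reduces the assertion ``$D$ dominates $(g,h)$'' to the disjunction ``$p_G(D)$ totally dominates $g$ in $G$, or $p_{H,g}(D)$ dominates $h$ in $H$.'' Since $D$ is a dominating set exactly when it dominates every $(g,h)\in V(G[H])$, I would prove both implications by fixing $g$ and quantifying this equivalence over all $h\in V(H)$.

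For the forward direction I would assume that $D$ dominates $G[H]$. Choosing, for each $g\in V(G)$, an arbitrary $h\in V(H)$ (here using that $H$ is nonempty) and applying the ``in particular'' clause of Lemma~\ref{lem:dominating-vertices}, I get that $g$ is dominated by $p_G(D)$; as $g$ is arbitrary, this shows that $p_G(D)$ is a dominating set of $G$. Next I would take any $g$ that is barely dominated by $p_G(D)$, so that $N_G[g]\cap p_G(D)=\{g\}$ and, in particular, $p_G(D)$ does not totally dominate $g$; then for every $h\in V(H)$ the equivalence in Lemma~\ref{lem:dominating-vertices} forces $p_{H,g}(D)$ to dominate $h$, whence $p_{H,g}(D)$ is a dominating set of $H$.

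For the converse I would assume that $p_G(D)$ dominates $G$ and that $p_{H,g}(D)$ dominates $H$ for every barely dominated $g$, and then check that an arbitrary vertex $(g,h)$ is dominated. Since $p_G(D)$ dominates $g$, I would split into cases according to whether $p_G(D)$ totally dominates $g$: if it does, Lemma~\ref{lem:dominating-vertices} immediately gives that $D$ dominates $(g,h)$; if it does not, then $g$ is dominated but not totally dominated, which means $N_G[g]\cap p_G(D)=\{g\}$, i.e.\ $g$ is barely dominated, so the hypothesis provides that $p_{H,g}(D)$ dominates $h$, and Lemma~\ref{lem:dominating-vertices} again yields that $D$ dominates $(g,h)$.

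The argument is mostly bookkeeping on top of Lemma~\ref{lem:dominating-vertices}; the only delicate point, and the reason the ``barely dominated'' hypothesis is exactly the right one, is the observation that a vertex $g$ dominated by $p_G(D)$ fails to be totally dominated by $p_G(D)$ if and only if it is barely dominated by $p_G(D)$. This equivalence is what allows one to restrict the condition on $p_{H,g}(D)$ to barely dominated vertices in both directions. The one edge case worth keeping in mind is the nonemptiness of the factors, which is needed so that $V(G[H])$ projects onto all of $V(G)$.
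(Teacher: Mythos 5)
Your proposal is correct and follows essentially the same route as the paper's own proof: both directions are obtained by quantifying Lemma~\ref{lem:dominating-vertices} over all vertices, using the definitional equivalence that a vertex dominated but not totally dominated by $p_G(D)$ is exactly a barely dominated one. The only cosmetic difference is that the paper fixes $h$ first and lets $g$ range in the forward direction, whereas you fix $g$ first and let $h$ range; the case analysis in the converse is identical.
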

\end{sloppypar}

\begin{proof}
Suppose first that $D$ is a dominating set in $G[H]$. Fix an arbitrary vertex $h\in V(H)$. For every vertex $g\in V(G)$, set $D$ dominates vertex $(g,h)$ in $G[H]$. Hence, by Lemma~\ref{lem:dominating-vertices}, set $p_G(D)$ dominates $g$ in $G$. Since this holds for an arbitrary vertex of $G$,
we infer that $p_G(D)$ is a dominating set in $G$. Moreover, if $g\in V(G)$ is a vertex that is barely dominated by $p_G(D)$, then, since $(g,h)$ is dominated by $D$, Lemma~\ref{lem:dominating-vertices} implies that $p_{H,g}(D)$ dominates $h$ in $H$. Since this holds for
an arbitrary vertex $h$ of $H$, we infer that $p_{H,g}(D)$ is a dominating set in $H$, as claimed.

Conversely, suppose that $p_G(D)$ is a dominating set in $G$ such that for every vertex $g\in V(G)$ that is barely dominated by $p_G(D)$, the set $p_{H,g}(D)$ is a dominating set in $H$. Consider an arbitrary vertex $(g,h)\in V(G[H])$. If
$p_G(D)$ totally dominates $g$, then by Lemma~\ref{lem:dominating-vertices}, we conclude that $D$ dominates $(g,h)$. Suppose that $g\in V(G)$ is a vertex that is not totally dominated by $p_G(D)$. Since $p_G(D)$ is a dominating set in $G$, vertex $g$ is barely dominated by $D$. By assumption, $p_{H,g}(D)$ is a dominating set in $H$. Therefore, $p_{H,g}(D)$ dominates $h$ in $H$, and we again conclude that $D$ dominates $(g,h)$ using Lemma~\ref{lem:dominating-vertices}.
\end{proof}

Proposition~\ref{lem:Dom} leads to an alternative proof of Theorem~\ref{thm:Gamma 2+}.

\begin{proof}[of Theorem~\ref{thm:Gamma 2+}]
Let $G$ be a graph without isolated vertices and suppose first that $\gamma(H)=1$. Let $D$ be a minimum dominating set in $G[H]$. Then by Proposition~\ref{lem:Dom} it follows that $p_G(D)$ is a dominating set in $G$.
Hence $\gamma(G[H]) = |D|\ge |p_G(D)|\geq \gamma(G)$, where the first inequality follows from the definition of the projection map $p_G$ and the second one from the fact that $p_G(D)$ is a dominating set in $G$.
Let now $D_1$ be a minimum dominating set in $G$, let $h$ be a universal vertex in $H$, and let $D=D_1\times \{h\}$.
It follows from Proposition~\ref{lem:Dom} that $D$ is a dominating set in $G[H]$.
Hence $\gamma(G[H])\leq |D|=|D_1|=\gamma(G)$. Therefore $\gamma(G[H])=\gamma(G)$.
	
Suppose now that $\gamma(H)\ge 2$.  Let $D_t$ be a minimum total dominating set in $G$. Then by Proposition~\ref{lem:Dom}, it follows that $D_t\times \{h\}$ is a dominating set in $G[H]$, for every $h\in V(H)$. This shows that 	
\begin{equation}\label{eq:gamma_t}	
\gamma(G[H])\leq \gamma_t(G).	
\end{equation} 	
Let $D$ be a minimum dominating set of $G[H]$.  Let $A\subseteq p_G(D)$ be the set of vertices of $p_G(D)$ that are strongly dominated by $p_G(D)$, and let $B=p_G(D)\setminus A$, that is, $B$ is the set of vertices barely dominated by $p_G(D)$. It follows
from Proposition~\ref{lem:Dom} that $p_G(D)$ is a dominating set of $G$ and for every $x\in B$, the set $p_{H,x}(D)$  is a dominating set of $H$. We conclude that 	
\begin{equation*}	
|D|=\sum_{x\in p_{G}(D)} |p_{H,x}(D)|=\sum_{x\in A} |p_{H,x}(D)|+\sum_{x\in B} |p_{H,x}(D)|\geq |A|+\gamma(H)|B|
\geq |A|+2|B|\,.	
\end{equation*}	
Since $G$ has no isolated vertices, we can associate to every vertex $b\in B$ a neighbor $b'$ of $b$ in $G$.
Let $B'=\{b': b \in B\}$. Clearly, $|B'|\leq |B|$ and it is easy to see that
the set $A\cup B \cup B'$ is a total dominating set in $G$. We conclude that
\begin{equation}\label{eq:D}	
\gamma(G[H])=|D|\geq |A|+2|B|\geq |A|+|B|+|B'|\geq \gamma_t(G).	
\end{equation}
Inequalities~\eqref{eq:gamma_t} and \eqref{eq:D} imply that $\gamma(G[H])=\gamma_t(G)$.	
\end{proof}

\subsection{Minimal dominating sets}

We now develop a characterization of minimal dominating sets in the lexicographic product $G[H]$. In order to state it, we need some additional terminology. Let $D$ be an irreducible dominating set in a graph $G$. For a vertex $x\in D$ we say that $x$ is {\em $D$-redundant} if $D\setminus \{x\}$ is a dominating set in $G$ (equivalently, if $x$ does not have any $D$-private closed neighbors). Using this terminology, the characterization of irreducible dominating sets given by Proposition~\ref{fact:irreducible}
can be restated as follows: a dominating set $D$ is irreducible if and only if every $D$-redundant vertex is adjacent to a $D$-leaf.

The characterization of minimal dominating sets in the product graph $G[H]$ is given in Theorem~\ref{thm:MinDom}. The theorem gives necessary and sufficient conditions that a set $D\subseteq V(G[H])$ has to satisfy in order to be a minimal dominating set in $G[H]$. The conditions are expressed in terms of conditions on the set $p_G(D)$, which is a subset of $V(G)$, and the sets $p_{H,x}(D)$ for $x\in p_G(D)$, which are subsets of $V(H)$. Note that, due to relation~\eqref{eq:decomposition-D}, the theorem also yields a constructive way obtaining all minimal dominating sets in $D$.

\begin{theorem}\label{thm:MinDom}
For any two graphs $G$ and $H$, a set $D\subseteq V(G)\times V(H)$ is a minimal dominating set in $G[H]$ if and only if the following conditions hold:
\begin{enumerate}[(i)]
\item $p_G(D)$ is an irreducible dominating set in $G$;
\item for every vertex $x\in p_G(D)$,
$$\textrm{the set }p_{H,x}(D)\textrm{ is a }\left\{
                \begin{array}{ll}
                  \textrm{subset of size $1$ of $V(H)$}, & \hbox{if $x$ is totally dominated by $p_G(D)$ in $G$;} \\
                  \textrm{minimal dominating set in $H$}, & \hbox{if $x$ is barely dominated by $p_G(D)$ in $G$;}
                \end{array}
              \right.
$$
\item every $p_G(D)$-redundant vertex is adjacent to a $p_G(D)$-leaf $y$ such that the set $p_{H,y}(D)$ is not dominating in $H$.
\end{enumerate}
\end{theorem}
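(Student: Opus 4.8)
The plan is to reduce everything to the private-neighbor characterizations in Proposition~\ref{fact:irreducible} together with the domination criterion for $G[H]$ in Proposition~\ref{lem:Dom}. Throughout I write $P=p_G(D)$ and $P_x=p_{H,x}(D)$, and I use the closed-neighborhood formula $N_{G[H]}[(g,h)]=(\{g\}\times N_H[h])\cup(N_G(g)\times V(H))$ from the proof of Lemma~\ref{lem:dominating-vertices}. Since a minimal dominating set is in particular a dominating set, Proposition~\ref{lem:Dom} lets me assume in both directions that $P$ is a dominating set in $G$ and that $P_x$ is a dominating set in $H$ for every barely dominated $x$; the whole argument is then organized according to whether a vertex $x\in P$ is totally dominated or barely dominated by $P$, and, in the former case, whether $x$ is $P$-redundant. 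The guiding principle is that $D$ is minimal if and only if every $(x,y)\in D$ has a $D$-private closed neighbor (Proposition~\ref{fact:irreducible}(1)), so in each direction I translate this condition into statements about $P$, the sets $P_x$, and private neighbors/leaves in the factors.

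For the \emph{if} direction I first invoke Proposition~\ref{lem:Dom}: condition (i) gives that $P$ dominates $G$, and condition (ii) gives that $P_x$ dominates $H$ for barely dominated $x$, so $D$ dominates $G[H]$. It then remains to produce a $D$-private closed neighbor for each $(x,y)\in D$. If $x$ is barely dominated, condition (ii) makes $P_x$ a minimal dominating set of $H$, so $y$ has a $P_x$-private closed neighbor $h$ in $H$; since $x$ is isolated in $P$, the vertex $(x,h)$ works. If $x$ is totally dominated, condition (ii) forces $P_x=\{y\}$, and I split on redundancy: when $x$ is not $P$-redundant it has a $P$-private closed neighbor $g$, which must satisfy $g\notin P$ and $N_G(g)\cap P=\{x\}$, so $(g,h)$ is private for any $h$; when $x$ is $P$-redundant, condition (iii) supplies a $P$-leaf $w$ adjacent to $x$ with $N_G(w)\cap P=\{x\}$ and $P_w$ non-dominating, so choosing $h^\ast$ undominated by $P_w$ makes $(w,h^\ast)$ private. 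In every case the second coordinate is chosen precisely to kill the contribution of $\{g\}\times N_H[h]$, while the single neighbor $x$ of $g$ (or $w$) in $P$ isolates $(x,y)$.

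For the \emph{only if} direction I again start from Proposition~\ref{lem:Dom} to get the domination parts of (i) and (ii). Condition (ii) is then obtained by a deletion argument: if $x$ is totally dominated with $|P_x|\ge 2$, or if $x$ is barely dominated with $P_x$ not a minimal dominating set of $H$, then removing a suitable single vertex of $D$ leaves the Proposition~\ref{lem:Dom} criterion intact, contradicting minimality. For the irreducibility in (i) and for condition (iii), I take a $P$-redundant vertex $x$, pick $(x,y)\in D$, and analyze a $D$-private closed neighbor $(g,h)$ guaranteed by minimality. The case $g=x$ would force $N_G(x)\cap P=\emptyset$, i.e.\ $x$ barely dominated, hence $x$ would be its own $P$-private closed neighbor and not redundant; so $g\in N_G(x)$. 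Then $\{x\}\times P_x\subseteq N_{G[H]}[(g,h)]\cap D=\{(x,y)\}$ gives $N_G(g)\cap P=\{x\}$, and the contribution $\{g\}\times(N_H[h]\cap P_g)$ must vanish; this yields the dichotomy $g\notin P$, or $g\in P$ with $P_g$ non-dominating. The first alternative makes $g$ a $P$-private closed neighbor of $x$, contradicting redundancy, so $g\in P$, whence $g$ is a $P$-leaf adjacent to $x$ with $P_g$ non-dominating---exactly (iii), and in particular $P$ is irreducible by the restated form of Proposition~\ref{fact:irreducible}.

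I expect the main obstacle to be the private-neighbor bookkeeping in the \emph{only if} direction, specifically disentangling the two ways in which $(x,y)$ can lie in $N_{G[H]}[(g,h)]$ (via $g=x$ versus $g\in N_G(x)$) and, inside the second case, the dichotomy $g\notin P$ versus $g\in P$. This is the step where the somewhat unusual condition (iii)---a leaf whose $H$-part fails to dominate---emerges, and making it match the definition of $P$-redundancy is the delicate point; the \emph{if} direction is comparatively mechanical once the right witness vertex in each case has been identified.
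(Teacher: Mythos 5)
Your proof is correct, and while much of it runs parallel to the paper's, the necessity of conditions $(i)$ and $(iii)$ is handled by a genuinely different mechanism. The paper proves necessity of $(i)$ and $(iii)$ by two separate contradiction arguments, each deleting a whole fiber $\{u\}\times V(H)$ from $D$ and verifying via Proposition~\ref{lem:Dom} (or directly, for $(i)$) that the result is still dominating; you instead apply Proposition~\ref{fact:irreducible}(1) inside the product: for a $p_G(D)$-redundant vertex $x$ you take a $D$-private closed neighbor $(g,h)$ of some $(x,y)\in D$ and read off, from the decomposition $N_{G[H]}[(g,h)]=(\{g\}\times N_H[h])\cup(N_G(g)\times V(H))$, that $g\in N_G(x)$, $N_G(g)\cap p_G(D)=\{x\}$, and $N_H[h]\cap p_{H,g}(D)=\emptyset$; ruling out $g\notin p_G(D)$ by redundancy then yields exactly the leaf with non-dominating fiber of condition $(iii)$, with irreducibility of $p_G(D)$ falling out as a corollary via the restated form of Proposition~\ref{fact:irreducible}. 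This buys a unified, constructive derivation of the two conditions and explains \emph{why} the unusual condition $(iii)$ has the form it does, at the cost of explicit neighborhood bookkeeping in the product; the paper's route stays entirely at the level of ``which subsets remain dominating'' and leans on Proposition~\ref{lem:Dom} for all verifications, which makes each individual step shorter but leaves $(iii)$ looking unmotivated. Your sufficiency proof (exhibiting a $D$-private closed neighbor for each $(x,y)\in D$ in the three cases: barely dominated, totally dominated non-redundant, totally dominated redundant) and your deletion arguments for $(ii)$ are, in substance, the same as the paper's---the paper phrases sufficiency as ``$D\setminus\{(g,h)\}$ fails Proposition~\ref{lem:Dom}'' but the undominated witnesses it produces are precisely your private neighbors.
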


\begin{proof}
First, we establish necessity of the three conditions. Let $D\subseteq V(G)\times V(H)$ be a minimal dominating set in $G[H]$. We show each of the three conditions one by one.

$(i)$ Proposition~\ref{lem:Dom} implies that $p_G(D)$ is a dominating set in $G$. To prove that $(i)$ holds, it remains to prove that $p_G(D)$ is irreducible. Suppose that $p_G(D)$ is a reducible dominating set in $G$. Then there exists $u\in p_G(D)$ such that $p_G(D) \setminus \{u\}$ is a dominating set in $G$ and the sets of vertices of $G$ totally dominated by $p_G(D)$ and $p_G(D)\setminus \{u\}$ coincide. Let $D'=D\setminus (\{u\}\times V(H))$. Observe that $D'$ is a proper subset of $D$ and $p_G(D')=p_G(D)\setminus \{u\}$.
We claim that $D'$ is also a dominating set in $G[H]$. Suppose this is not the case, that is, there exists a vertex $(g,h)$ of $G[H]$ not dominated by $D'$.
Since $(g,h)$ is dominated by $D$ but not by $D'$, all elements of $N_{G[H]}[(g,h)]\cap D$ have $u$ as first coordinate.
If $g\neq u$, then we infer that $g$ is adjacent to $u$ in $G$; in particular,
$g$ is totally dominated by $p_G(D)$ in $G$. Since the sets of vertices of $G$ totally dominated by $p_G(D)$ and $p_G(D)\setminus \{u\}$ coincide, $g$ is totally dominated by $p_G(D')$, which further implies that $(g,h)$ is dominated by $D'$.
If $g=u$, then since $p_G(D')$ is a dominating set in $G$, it follows that $u$ is totally dominated by $p_G(D')$, and consequently, $(u,h)$ is dominated by $D'$. This shows that $D'$ is a dominating set and contradicts the assumption that $D$ is a minimal dominating set.
It follows that $p_G(D)$ is an irreducible dominating set in $G$.

$(ii)$ Let $x\in p_G(D)$ be totally dominated by $p_G(D)$. Since $x\in p_G(D)$, it follows that $p_{H,x}(D)$ is non-empty. We claim that it has size~$1$. Suppose to the contrary that $h_1,h_2\in p_{H,x}(D)$ for $h_1\neq h_2$. Let $D'=D\setminus \{(x,h_2)\}$. We claim that $D'$ is also a dominating set in $G[H]$. For $g\in V(G)\setminus \{x\}$ and $h\in V(H)$, it is clear that $(g,h)$ is dominated by $(x,h_1)$ if and only if it is dominated by $(x,h_2)$. Since $(x,h_1),(x,h_2)\in D$ and $D$ is a dominating set in $G[H]$, it follows that $(g,h)$ is dominated by $D'$. It remains to show that $D'$ dominates vertices with first coordinate equal to $x$. However, this easily follows from the fact that $x$ is totally dominated by $p_G(D)$. We conclude that $D'$ is a dominating set in $G[H]$. Since $D'$ is a proper subset of $D$, this contradicts the assumption that $D$ is a minimal dominating set in $G[H]$. The obtained contradiction shows that $|p_{H,x}(D)|=1$.

Suppose now that $x\in p_G(D)$ is barely dominated by $p_G(D)$. Then by Proposition~\ref{lem:Dom}, the set $p_{H,x}(D)$ is a dominating set in $H$.
Moreover, $p_{H,x}(D)$ is also a minimal dominating set in $H$, since otherwise, if $D_H$ is a dominating set in $H$ that is properly contained in $p_{H,x}(D)$,
one could apply Proposition~\ref{lem:Dom} to infer that the set $(D\setminus(\{x\}\times V(H)))\cup (\{x\}\times D_H)$ is a dominating set in $G[H]$ that is properly contained in $D$, contradicting the minimality of $D$. This establishes $(ii)$.

$(iii)$ Suppose that $(iii)$ fails, that is, there exists a $p_G(D)$-redundant vertex $x$ such that for every $p_G(D)$-leaf $y$ adjacent to $x$, the set $p_{H,y}(D)$ is dominating in $H$.
Let $D'=D\setminus (\{x\}\times V(H))$. We claim that $D'$ is also a dominating set in $G[H]$. Using Proposition~\ref{lem:Dom}, it is sufficient to verify that $p_G(D')$ is a dominating set in $G$ such that for every vertex $g\in V(G)$ that is barely dominated by $p_G(D')$,
the set $p_{H,g}(D')$ is a dominating set in $H$. Observe that $p_G(D')=p_G(D)\setminus \{x\}$.
The fact that $x$ is $D$-redundant implies that $p_G(D')$ is a dominating set in $G$.
Now, let $g\in V(G)$ be a vertex that is barely dominated by $p_G(D')$. Then $g\in p_{G}(D')$ and $g\in p_G(D)$. In particular, this implies that
$p_{H,g}(D')=p_{H,g}(D)$. If $g$ is barely dominated also by $p_G(D)$, then the fact that $D$ is a dominating set in $G[H]$ and Proposition~\ref{lem:Dom} imply that the set $p_{H,g}(D')$ is a dominating set in $H$.
If $g$ is not barely dominated by $p_G(D)$, then $g$ is a $p_G(D)$-leaf in $G$ adjacent to $x$. By assumption, the set $p_{H,g}(D')$ is a dominating set in $H$. By Proposition~\ref{lem:Dom}, we conclude that $D'$ is a dominating set, contradicting the assumption that $D$ is a minimal dominating set. The obtained contradiction shows that $(iii)$ holds.

\smallskip
In the rest of the proof, we show that the three conditions are also sufficient for $D$ to be a minimal dominating set. Suppose that conditions $(i)$--$(iii)$ hold. The fact that $D$ is a dominating set in $G[H]$ follows from Proposition~\ref{lem:Dom}.

It remains to prove minimality. Let $(g,h)\in D$ be arbitrary, and define the set $D'$ as $D'=D\setminus \{(g,h)\}$. If $g$ is barely dominated by $p_G(D)$, then condition $(ii)$ implies that $p_{H,g}(D)$ is a minimal dominating set in $H$.
Note that $h\in p_{H,g}(D)$. Since $p_{H,g}(D)\setminus \{h\}$ is not a dominating set in $H$, there exists a vertex $h'$ in $H$ that is not dominated by
$p_{H,g}(D)\setminus \{h\}$. It follows that vertex $(g,h')$ is not dominated by the set of vertices in $D'$ with first coordinate $g$.
Since $g$ is barely dominated by $p_G(D)$, no vertex in $G[H]$ with first coordinate $g$ is adjacent to a vertex in $D$ with first coordinate other than $g$. We infer that vertex $(g,h')$ is not dominated by $D'$, contradicting the assumption that $D'$ is a dominating set in $G[H]$.

Next suppose that $g$ is totally dominated by $p_G(D)$. Since $(g,h)\in D$ and $g$ is totally dominated by $D$, condition $(ii)$ implies that there exists a vertex $h\in V(H)$ such that $p_{H,g}(D)=\{h\}$. We conclude that $p_G(D')=p_G(D)\setminus \{g\}$. If $g$ is not $p_G(D)$-redundant, then $p_G(D)\setminus \{g\}$ is not a dominating set in $G$, and by Proposition~\ref{lem:Dom} it follows that $D'$ is not a dominating set in $G[H]$. Now suppose that $g$ is $p_G(D)$-redundant. By condition $(iii)$, vertex $g$ is adjacent to a $p_G(D)$-leaf, say $g_1$, such that $p_{H,g_1}(D)$ is not a dominating set in $H$. Since $p_{H,g_1}(D') = p_{H,g_1}(D)$ and $g_1$ is barely dominated by $D'$, it follows from Proposition~\ref{lem:Dom} that $D'$ is not a dominating set in $G[H]$. We conclude that no proper subset of $D$ is a dominating set in $G[H]$ and hence $D$ is a minimal dominating set in $G[H]$. This completes the proof.
\end{proof}

Note that assuming condition $(ii)$ in Theorem~\ref{thm:MinDom}, condition $(iii)$ can be equivalently stated as follows: every $p_G(D)$-redundant vertex is adjacent to a $p_G(D)$-leaf $y$ such that the unique vertex in $p_{H,y}(D)$ is not universal in $H$. This condition is trivially satisfied if $\gamma(H)\ge 2$. Therefore, for the case when $\gamma(H)\ge 2$, Theorem~\ref{thm:MinDom} takes on the following simpler formulation.

\begin{theorem}\label{thm:MinDom_gammaH=2}
For any two graphs $G$ and $H$ with $\gamma(H)\ge 2$, a set $D\subseteq V(G)\times V(H)$ is a minimal dominating set in $G[H]$ if and only if the following conditions hold:
\begin{enumerate}[(i)]
\item $p_G(D)$ is an irreducible dominating set in $G$;
\item for every vertex $x\in p_G(D)$,
$$\textrm{the set }p_{H,x}(D)\textrm{ is a }\left\{
                \begin{array}{ll}
                  \textrm{subset of size $1$ of $V(H)$}, & \hbox{if $x$ is totally dominated by $p_G(D)$ in $G$;} \\
                  \textrm{minimal dominating set in $H$}, & \hbox{if $x$ is barely dominated by $p_G(D)$ in $G$.}
                \end{array}
              \right.
$$
\end{enumerate}
\end{theorem}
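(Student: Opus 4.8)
The plan is to derive Theorem~\ref{thm:MinDom_gammaH=2} as a direct corollary of Theorem~\ref{thm:MinDom} by showing that, under the hypothesis $\gamma(H)\ge 2$, condition~$(iii)$ of Theorem~\ref{thm:MinDom} is automatically implied by conditions~$(i)$ and~$(ii)$, and hence can be dropped. Thus the two-condition characterization of Theorem~\ref{thm:MinDom_gammaH=2} is logically equivalent to the three-condition characterization of Theorem~\ref{thm:MinDom} in this regime.

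First I would recall the reformulation of condition~$(iii)$ noted in the paragraph preceding the statement: assuming~$(ii)$, a $p_G(D)$-leaf $y$ adjacent to a $p_G(D)$-redundant vertex satisfies ``$p_{H,y}(D)$ is not dominating in $H$'' precisely when the unique vertex of $p_{H,y}(D)$ is not universal in $H$. Here I must justify that the relevant set $p_{H,y}(D)$ is indeed a singleton: since $y$ is a $p_G(D)$-leaf it has a neighbor in $p_G(D)$, so $y$ is totally dominated by $p_G(D)$, and therefore condition~$(ii)$ forces $|p_{H,y}(D)|=1$. Writing $p_{H,y}(D)=\{h_y\}$, the set $\{h_y\}$ dominates $H$ if and only if $h_y$ is universal, so ``$p_{H,y}(D)$ is not dominating'' is equivalent to ``$h_y$ is not universal.''

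Next I would invoke the hypothesis $\gamma(H)\ge 2$. A universal vertex $h$ in $H$ would satisfy $N_H[h]=V(H)$, so $\{h\}$ would be a dominating set of $H$, giving $\gamma(H)=1$; this contradicts $\gamma(H)\ge 2$. Hence $H$ has no universal vertex, so for every $p_G(D)$-leaf $y$ the vertex $h_y$ is necessarily non-universal, i.e.\ $p_{H,y}(D)$ is not dominating in $H$. Consequently, whenever a $p_G(D)$-redundant vertex $x$ is adjacent to a $p_G(D)$-leaf~$y$—which it must be, since $(i)$ asserts that $p_G(D)$ is irreducible and, by Proposition~\ref{fact:irreducible}, every redundant vertex of an irreducible dominating set is adjacent to a leaf—that same leaf automatically witnesses condition~$(iii)$. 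Therefore $(i)$ and~$(ii)$ together imply~$(iii)$, and the equivalence of Theorem~\ref{thm:MinDom} collapses to the two stated conditions.

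I expect the only real subtlety to be keeping the bookkeeping of the two cases in~$(ii)$ aligned with the leaf condition—specifically, making explicit that a leaf is totally dominated (so the singleton case of~$(ii)$ applies) and that irreducibility of $p_G(D)$ guarantees the existence of an adjacent leaf for each redundant vertex in the first place. Both of these are immediate from the definitions and from Proposition~\ref{fact:irreducible}, so no genuine obstacle arises; the proof is essentially the observation that the presence of a universal vertex in $H$ is the sole circumstance under which condition~$(iii)$ can impose a nontrivial restriction, and this circumstance is excluded by $\gamma(H)\ge 2$.
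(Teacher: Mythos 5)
Your proposal is correct and takes essentially the same route as the paper: the paper derives Theorem~\ref{thm:MinDom_gammaH=2} from Theorem~\ref{thm:MinDom} via precisely the remark that, assuming~$(ii)$, condition~$(iii)$ reduces to requiring the unique vertex of $p_{H,y}(D)$ to be non-universal in $H$, which is vacuous when $\gamma(H)\ge 2$. The details you supply---that a $p_G(D)$-leaf is totally dominated so the singleton case of~$(ii)$ applies, and that irreducibility together with Proposition~\ref{fact:irreducible} furnishes the adjacent leaf for each redundant vertex---are exactly the steps the paper leaves implicit.
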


It might be worth pointing out that in general, conditions $(i)$ and $(ii)$ alone do not imply condition $(iii)$. This is shown by the following example.

\begin{example}
Let $G$ be the $5$-vertex path with vertices $g_1\,g_2\,g_3\,g_4\,g_5$ along the path and let $H$ be the $3$-vertex path with vertices
$h_1\,h_2\,h_3$ along the path. Let  $D=\{(g_2,h_2),(g_3,h_1),(g_4,h_2)\}$. Then $D$ is a dominating set in $G[H]$. Moreover, we have $p_G(D)=\{g_2,g_3,g_4\}$ and the set $p_G(D)$ is an irreducible dominating set in $G$. It follows that condition $(i)$ from Theorem~\ref{thm:MinDom} holds. Observe that every vertex from $p_G(D)$ is totally dominated by  $p_G(D)$. It is now easy to see that condition $(ii)$ from Theorem~\ref{thm:MinDom} holds. However, it is not difficult to verify that $D'=\{(g_2,h_2),(g_4,h_2)\}$, which is a proper subset of $D$, is also a dominating set in $G[H]$. We conclude that $D$ is not a minimal dominating set in $G[H]$, but satisfies conditions $(i)$ and $(ii)$ from Theorem~\ref{thm:MinDom}.
\end{example}

For later use, we establish in the following proposition a lower bound on the upper domination number $\Gamma$ of the lexicographic product of two graphs.

\begin{proposition}\label{prop:upperdomination}
For every two graphs $G$ and $H$, it holds that $\Gamma(G[H])\geq \alpha(G)\Gamma(H)$.
\end{proposition}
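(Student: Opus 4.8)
The plan is to exhibit an explicit minimal dominating set in $G[H]$ of size exactly $\alpha(G)\Gamma(H)$, using the characterization from Theorem~\ref{thm:MinDom}. First I would fix a maximum independent set $S$ in $G$, so that $|S|=\alpha(G)$, and a minimal dominating set $D_H$ in $H$ of maximum size, so that $|D_H|=\Gamma(H)$. Then I would set $D = \bigcup_{x\in S}(\{x\}\times D_H)$ and claim that $D$ is a minimal dominating set in $G[H]$. Since $|D| = |S|\cdot|D_H| = \alpha(G)\Gamma(H)$ and, by definition, $\Gamma(G[H])$ is the maximum size of a minimal dominating set in $G[H]$, this immediately gives $\Gamma(G[H])\geq \alpha(G)\Gamma(H)$.

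To verify that $D$ is a minimal dominating set I would check conditions $(i)$--$(iii)$ of Theorem~\ref{thm:MinDom}, noting that $p_G(D)=S$ and $p_{H,x}(D)=D_H$ for every $x\in S$. The structural observation driving all three verifications is that $S$, being a maximum (hence maximal) independent set, is itself a \emph{minimal dominating set} in $G$: it is dominating by maximality, and for each $x\in S$ independence gives $N_G[x]\cap S=\{x\}$, so $x$ is its own $S$-private closed neighbor, whence minimality follows from Proposition~\ref{fact:irreducible}$(1)$. Condition $(i)$ is then immediate, since every minimal dominating set is irreducible. For condition $(ii)$, note that independence means no $x\in S$ has a neighbor in $S$, so no vertex of $S$ is totally dominated by $S$; as $S$ dominates $G$, every $x\in S$ is \emph{barely} dominated by $S$, and the required conclusion is precisely that $p_{H,x}(D)=D_H$ be a minimal dominating set in $H$, which holds by the choice of $D_H$. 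For condition $(iii)$, observe that because $S$ is independent it has no $S$-leaves, and because $S$ is a minimal dominating set it has no $S$-redundant vertices; hence the universally quantified statement in $(iii)$ is vacuously true.

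I do not expect a serious obstacle; the argument is essentially the single observation that a maximal independent set is simultaneously a minimal dominating set, which is what makes all three conditions of Theorem~\ref{thm:MinDom} line up. The only point requiring care is condition $(iii)$, where one must note that the quantification over $p_G(D)$-redundant vertices is empty. Edge cases (for instance $G$ edgeless or with isolated vertices) need no separate treatment, since the facts used—maximality forcing domination and independence forcing each $x$ to be barely dominated and a private neighbor of itself—hold uniformly.
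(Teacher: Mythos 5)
Your proposal is correct and follows essentially the same route as the paper: both take $D = S \times D_H$ with $S$ a maximum independent set in $G$ and $D_H$ a maximum-size minimal dominating set in $H$, and invoke Theorem~\ref{thm:MinDom} to conclude minimality. Your explicit verification of conditions $(i)$--$(iii)$ (via the observation that a maximal independent set is a minimal dominating set whose vertices are all barely dominated, with no redundant vertices) is accurate and merely spells out what the paper leaves to the reader.
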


\begin{proof}
Let $S$ be a maximum independent set in $G$, let $A$ be a minimal dominating set in $H$ of size $\Gamma(H)$, and let $D=S\times A$.
Since $|S| = \alpha(G)$ and $|A| = \Gamma(H)$, Theorem~\ref{thm:MinDom} implies that $D$ is a minimal dominating set in $G[H]$ of size $\alpha(G)\Gamma(H)$. This shows that $\Gamma(G[H])\geq \alpha(G)\Gamma(H)$.
\end{proof}

\begin{remark}
The value of $\Gamma(G[H])$ cannot be bounded from above by any function of the product $\alpha(G)\Gamma(H)$.
For example, let $G$ be the graph obtained from two copies of $K_n$ (for $n\geq 4$) joined by a perfect matching and let $H=C_4$.
Then $\Gamma(G[H])= n$ and $\alpha(G)\Gamma(H)=2\cdot 2=4$.
\end{remark}

\section{Well-dominated lexicographic product graphs}

In this section we characterize well-dominated nontrivial lexicographic product graphs.
Recall that a graph is well-dominated if all of its minimal dominating sets are of the same size. A lexicographic product graph $G[H]$ is connected if and only if $G$ is connected
(see, e.g., Corollary 5.14~in~\cite{MR2817074}). In particular, if $G$ has components $G_1,\ldots, G_k$, then the components of $G[H]$ are $G_1[H],\ldots, G_k[H]$. It is not difficult to see that a graph is well-dominated if and only if all of its components are well-dominated. Therefore, when characterizing nontrivial lexicographic product graphs that are well-dominated, we may without loss of generality restrict our attention to the case of nontrivial products $G[H]$ such that $G$ is connected. The following theorem states the corresponding characterization.

\begin{theorem}\label{thm:Well-dominated Lex}
A nontrivial lexicographic product, $G[H]$, of a connected graph $G$ and a graph $H$
is well-dominated if and only if one of the following conditions holds:
\begin{enumerate}[(i)]
\item $G$ is well-dominated and $H$ is complete, or
\item $G$ is complete and $H$ is well-dominated with $\gamma(H)=2$.
\end{enumerate}
\end{theorem}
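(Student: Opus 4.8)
The plan is to prove both directions, treating the two sufficiency implications as routine and concentrating on necessity. Throughout, for a minimal dominating set $D$ of $G[H]$ I write $P=p_G(D)$ and split $P$ into the set $A_P$ of vertices totally dominated by $P$ and the set $B_P$ of vertices barely dominated by $P$ (the isolated vertices of $G[P]$), so that Theorem~\ref{thm:MinDom} gives $|D|=|A_P|+\sum_{x\in B_P}|p_{H,x}(D)|$, each summand being the size of a minimal dominating set of $H$. For sufficiency of $(i)$: when $H$ is complete, condition $(iii)$ of Theorem~\ref{thm:MinDom} can hold only when $P$ has no redundant vertex (every singleton of a complete graph is dominating, so no $p_G(D)$-leaf $y$ with $p_{H,y}(D)$ non-dominating exists), hence the minimal dominating sets of $G[H]$ are exactly those with $P$ a minimal dominating set of $G$ and all $p_{H,x}(D)$ singletons; thus $|D|=|P|$, and the sizes occurring in $G[H]$ are precisely those occurring in $G$, so $G[H]$ is well-dominated iff $G$ is. For sufficiency of $(ii)$: since $\gamma(H)=2$ I use Theorem~\ref{thm:MinDom_gammaH=2} together with the recorded fact that the only irreducible dominating sets of a complete graph are its singletons and its pairs; a singleton $P$ gives $B_P=P$ and $|D|=\gamma(H)=2$, a pair gives $A_P=P$ and $|D|=2$, so every minimal dominating set of $G[H]$ has size $2$.

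For necessity, assume $G[H]$ is well-dominated. The first step is to show $H$ is well-dominated. Fixing a maximal independent set $S$ of $G$ — a minimal, hence irreducible, dominating set of $G$ with $B_S=S$ — Theorem~\ref{thm:MinDom} shows that $\bigcup_{x\in S}\{x\}\times A_x$ is a minimal dominating set of $G[H]$ for every choice of minimal dominating sets $A_x$ of $H$. Varying a single $A_{x_0}$ while keeping the others fixed forces all minimal dominating sets of $H$ to have equal size, i.e.\ $\gamma(H)=\Gamma(H)$. The second step is that $G$ is well-covered: knowing $H$ well-dominated, the same construction yields minimal dominating sets of $G[H]$ of size $|S|\gamma(H)$ for every maximal independent set $S$, so all maximal independent sets of $G$ share a common size $\alpha(G)$, and moreover $\gamma(G[H])=\alpha(G)\gamma(H)$.

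Now I split on $\gamma(H)$. If $\gamma(H)=1$, then $H$ is well-dominated with a universal vertex, which forces $H$ to be complete; by the sufficiency analysis of $(i)$ the product is well-dominated iff $G$ is, so $G$ is well-dominated and $(i)$ holds. If $\gamma(H)\ge 2$, Theorem~\ref{thm:Gamma 2+} gives $\gamma(G[H])=\gamma_t(G)$, and combining with $\gamma(G[H])=\alpha(G)\gamma(H)$ and the standard chain $\gamma_t(G)\le 2\gamma(G)\le 2\alpha(G)$ yields $\alpha(G)\gamma(H)=\gamma_t(G)\le 2\alpha(G)$. Hence $\gamma(H)=2$, all inequalities are equalities, and $\gamma_t(G)=\gamma(G[H])=2\alpha(G)$.

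The remaining and hardest step is to deduce that $G$ is complete. Suppose not; since $G$ is connected there are non-adjacent vertices $u,v$ with a common neighbor $w$. Extend $\{u,v\}$ to a maximal independent set $S$ (so $w\notin S$ and $|S|=\alpha(G)$), put $T=N(w)\cap S$ with $|T|\ge 2$, and set $P=S\cup\{w\}$, a dominating set that need not be irreducible; then $B_P=S\setminus N(w)$, so $|P|+|B_P|=2|S|+1-|T|\le 2\alpha(G)-1$. The delicate point is to control reduction of $P$ to an irreducible dominating set $P'$: each reduction removes a redundant vertex $t$ with $N(P\setminus\{t\})=N(P)$, and such a $t$ is never isolated in the current subgraph, so one checks that $B$ is unchanged while $|Q|$ drops by one, whence the quantity $|Q|+|B_Q|$ strictly decreases at each step. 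Therefore $P'$ is irreducible with $|P'|+|B_{P'}|\le 2\alpha(G)-1$, and by Theorem~\ref{thm:MinDom_gammaH=2} (using $\gamma(H)=\Gamma(H)=2$) it produces a minimal dominating set of $G[H]$ of size $|P'|+|B_{P'}|<2\alpha(G)=\gamma(G[H])$, contradicting the minimality of $\gamma(G[H])$. Hence $G$ is complete and $(ii)$ holds. The main obstacle is exactly this argument: manufacturing a too-small minimal dominating set of $G[H]$ from a non-complete $G$, the subtlety being that the natural candidate $S\cup\{w\}$ need not be irreducible, so one must verify that reduction only decreases the count $|Q|+|B_Q|$.
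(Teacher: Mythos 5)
Your proposal is correct, and its global architecture matches the paper's: both directions run through Theorem~\ref{thm:MinDom} and Theorem~\ref{thm:MinDom_gammaH=2}, sufficiency of $(i)$ and $(ii)$ is argued identically, necessity starts by producing products $S\times D_i$ over a maximal independent set $S$ to force $H$ well-dominated, the case $\gamma(H)=1$ is handled the same way, and $\gamma(H)\ge 3$ is excluded by the same bound $\gamma_t(G)\le 2\gamma(G)\le 2\alpha(G)$ (you compare it against $\gamma(G[H])=\alpha(G)\gamma(H)$, which you extract directly from well-dominatedness of the product, where the paper instead invokes $\Gamma(G[H])\ge\alpha(G)\Gamma(H)$ from Proposition~\ref{prop:upperdomination}; these are equivalent uses of $\gamma=\Gamma$). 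Where you genuinely depart is the hardest step, showing $\gamma(H)=2$ forces $G$ complete. The paper takes $x,y$ at distance two with common neighbor $u$, a maximal independent set $S\supseteq\{x,y\}$, and an inclusion-minimal $S'\subseteq S\cup\{u\}$ with $N[S']=N[S\cup\{u\}]$ and $N(S')=N(S\cup\{u\})$, then chases equalities in $2|S'|=2|S|+|N[u]\cap S'|$ (using $|N[u]\cap S'|\ge 2$ and $|S'|\le |S|+1$) until $S'=S\cup\{u\}$ and $|N[u]\cap S'|=2$ collide with $\{u,x,y\}\subseteq N[u]\cap S'$. You instead reduce $P=S\cup\{w\}$ to an irreducible dominating set one step at a time and prove a monotonicity fact: in a reduction step $Q\mapsto Q\setminus\{t\}$ the removed vertex $t$ cannot be barely dominated (else $Q\setminus\{t\}$ would not dominate $t$), and the set $B_Q$ of barely dominated vertices is preserved precisely because $N(Q)=N(Q\setminus\{t\})$ (if $t$ were the only $Q$-neighbor of some $x\in Q\setminus\{t\}$, then $x\in N(Q)\setminus N(Q\setminus\{t\})$), so the weight $|Q|+|B_Q|$ --- which is exactly the size of the minimal dominating set of $G[H]$ built from $Q$ via Theorem~\ref{thm:MinDom_gammaH=2} --- only decreases; starting from $|P|+|B_P|=2|S|+1-|T|\le 2\alpha(G)-1$ this manufactures a dominating set of $G[H]$ strictly smaller than $\gamma(G[H])=2\alpha(G)$, an outright impossibility. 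Your route buys a strict inequality against the domination number in place of the paper's equality analysis against one specific minimal dominating set, and the invariance of $B_Q$ under irreducible reduction is a small reusable lemma about irreducible dominating sets; the paper's version, by pinning $S'$ down exactly, stays purely local and needs no auxiliary monotonicity claim. I checked the two points where your argument could have failed --- that a removable vertex is never barely dominated, and that $B_Q$ cannot change when $t$ is deleted --- and both follow exactly from the clause $N(Q)=N(Q\setminus\{t\})$ in the definition of reducibility, so the proof is sound.
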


\begin{proof}
We start by establishing the simpler direction, namely that each of the two conditions is sufficient for the product graph to be well-dominated. Suppose first that $G$ is a connected well-dominated graph and $H$ is complete.
Let $D$ be a minimal dominating set in $G[H]$. Since $H$ is complete, conditions $(i)$ and $(ii)$ of Theorem~\ref{thm:MinDom} imply that $p_G(D)$ is an irreducible dominating set in $G$ such that for each vertex $x\in p_G(D)$, we have $|p_{H,x}(D)| = 1$.
Consequently, $|D|=|p_G(D)|$. We claim that $p_G(D)$ is in fact a minimal dominating set in $G$.
If this were not the case, then $p_G(D)$ would contain a $p_G(D)$-redundant vertex, say $x$. By condition $(iii)$ of
Theorem~\ref{thm:MinDom}, vertex $x$ is adjacent to a $p_G(D)$-leaf $y$ such that the set
$p_{H,y}(D)$ is not dominating in $H$. However, this contradicts the fact that $p_{H,y}(D)$ is a non-empty set in a complete graph.
It follows that no vertex in $p_G(D)$ is $p_G(D)$-redundant; hence, $p_G(D)$ is a minimal dominating set.
Let $D'$ be another minimal dominating set in $G[H]$. Then, by the same arguments, $p_G(D')$ is also a minimal dominating set in $G$, and $|D'|=|p_G(D')|$. Since $G$ is well-dominated, it follows that $|p_G(D)|=|p_G(D')|$ and consequently $|D|=|D'|$.
This shows that any two minimal dominating sets in $G[H]$ are of the same size; hence $G[H]$ is a well-dominated graph.

Second, suppose that $G$ is complete and $H$ is well-dominated with $\gamma(H)=2$. We claim that all minimal dominating sets in $G[H]$ are of size $2$. Let $D$ be a minimal dominating set in $G[H]$. Then by Theorem~\ref{thm:MinDom_gammaH=2} the set $p_{G}(D)$ is an irreducible dominating set in $G$. Since $G$ is complete, any irreducible dominating set in $G$ is of size one or two. Suppose first that $|p_{G}(D)|=1$ and let $p_{G}(D)=\{x\}$. By Theorem~\ref{thm:MinDom_gammaH=2}, the set $p_{H,x}(D)$ is a minimal dominating set in $H$. Since $H$ is well-dominated with $\gamma(H)=2$, it follows that $|p_{H,x}(D)|=2$, and therefore $|D|=2$. Suppose now that $|p_{G}(D)|=2$ and let $p_{G}(D)=\{x,y\}$. Then $x$ and $y$ are both totally dominated by $p_{G}(D)$ and therefore, again by Theorem~\ref{thm:MinDom_gammaH=2}, $|p_{H,x}(D)|=|p_{H,y}(D)|=1$. We conclude that $|D|=2$. This shows that all minimal dominating sets in $G[H]$ are of size $2$, as claimed. Therefore, $G[H]$ is well-dominated with $\gamma(H)=2$.

\smallskip
It remains to show that the disjunction of the two conditions is also necessary for the product graph to be well-dominated. Suppose that the product $G[H]$ is well-dominated. First, we show that $H$ is well-dominated. Suppose this is not the case and let $D_1$ and $D_2$ be two minimal dominating sets of $H$ with $|D_1|\neq |D_2|$.
Let $S$ be a maximal independent set in $G$.
We claim that $S\times D_1$ and $S\times D_2$ are minimal dominating sets of $G[H]$.  Observe that $p_G(S\times D_i)=S$ for $i\in \{1,2\}$, and since $S$ is a maximal independent set, it is also a minimal dominating set in $G$, and hence $S$ is an irreducible dominating set in $G$. Moreover, since $S$ is minimal dominating set, it follows that there are no $S$-redundant vertices.
Observe that for every $x\in S$ we have $p_{H,x}(S\times D_i)=D_i$. Hence,  Theorem~\ref{thm:MinDom} implies that
$S\times D_1$ and $S\times D_2$ are minimal dominating sets in $G[H]$. The fact that
$S\times D_1$ and $S\times D_2$ are of different cardinalities now contradicts the assumption that $G[H]$ is well-dominated.

We consider three cases depending on the value of $\gamma(H)$. Suppose first that $\gamma(H)=1$. Since $H$ is well-dominated, all minimal dominating sets in $H$ are of size $1$; hence, $H$ is a complete graph. We claim that $G$ is well-dominated (and thus condition $(i)$ will hold). Let $D_1$ and $D_2$ be two minimal dominating sets in $G$. The minimality of $D_1$ and $D_2$ implies that $G$ contains no $D_1$-redundant (resp.~$D_2$-redundant) vertices. Let $h$ be an arbitrary vertex of $H$. By Theorem~\ref{thm:MinDom}, the sets $D_1\times \{h\}$ and $D_2\times \{h\}$ are minimal dominating sets in $G[H]$. Since all minimal dominating sets in $G[H]$ are of the same size, it follows that $|D_1|=|D_2|$; hence, $G$ is well-dominated, as claimed.

Suppose now that $\gamma(H) = 2$. We claim that in this case $G$ is complete (and thus condition $(ii)$ will hold).
Suppose, to the contrary, that $G$ is not complete. Then, since $G$ is connected, it contains a pair of vertices, say $x$ and $y$, at distance two.
Let $u$ be a common neighbor of $x$ and $y$. Let $S$ be a maximal independent set in $G$ containing $x$ and $y$ and let $A$ be a minimum dominating set in $H$.
Since $S$ is a maximal independent set in $G$, it follows that $S$ is a minimal dominating set in $G$. Moreover, as there are no edges between vertices inside $S$, it follows that every vertex of $S$ is barely dominated by $S$.
Thus, we can apply Theorem~\ref{thm:MinDom_gammaH=2} to infer that $S\times A$ is a minimal dominating set in $G[H]$; its size is $2|S|$.
Clearly, the set $S\cup \{u\}$ is a dominating set in $G$.
Now, let $S'$ be an inclusion-minimal subset of $S\cup \{u\}$ such that $N[S'] = N[S\cup \{u\}]$ (that is, $S'$ is dominating in $G$) and $N(S') = N(S\cup \{u\})$. Since $u$ is the only neighbor of $x$ in $S\cup \{u\}$, we infer that $u\in S'$.
Moreover, since $u\in N(S\cup \{u\})$, we have $u\in N(S')$. It follows that $S'\cap N[u]$ contains $u$ and at least one neighbor of $u$.
Take any vertex $h\in V(H)$ and consider the set $D$ defined with
$D = ((S'\cap N[u]) \times \{h\})\cup((S'\setminus N[u])\times A)$.
Since every vertex in $S'\cap N[u]$ is totally dominated by $S'$ and every vertex in $S'\setminus N[u]$ is barely dominated by $S'$, Theorem~\ref{thm:MinDom_gammaH=2} implies that $D$ is a minimal dominating set in $G[H]$.
Since $G[H]$ is well-dominated, the cardinality of $D$ must be equal to that of $S\times A$.
Therefore, $|N[u]\cap S'|+2|S'\setminus N[u]| = 2|S|$, or, equivalently, $2|S'|= 2|S|+|N[u]\cap S'|$.
Since $|N[u]\cap S'|\ge 2$ and $|S'|\le |S|+1$, we obtain
$2|S'|\le 2(|S|+1)\le 2|S|+|N[u]\cap S'|= 2|S'|$, therefore equalities must hold throughout.
It follows that $|N[u]\cap S'|= 2$ and $|S'|= |S|+1$, which implies that $S' = S\cup \{u\}$ and therefore
$3=|\{u,x,y\}|\le |N[u]\cap S'|$, a contradiction.

Finally, suppose that $\gamma(H)\geq 3$. Since $\gamma_t(G)\leq 2\gamma(G)$ (see, e.g.,~\cite{bollobas1979graph}) and $\gamma(G) \le \alpha(G)$, we have $\gamma_t(G)\leq 2\alpha(G)$. In addition, Theorem~\ref{thm:Gamma 2+} gives $\gamma(G[H])=\gamma_t(G)$; therefore, $\gamma(G[H]) \le 2\alpha(G)$. Moreover, by Proposition~\ref{prop:upperdomination} it follows that $\Gamma(G[H])\geq \alpha(G)\Gamma(H)\geq 3\alpha(G)$. Since $G[H]$ is well-dominated, it follows that $\Gamma(G[H])=\gamma(G[H])$.
We obtain that
$$
\gamma_t(G)=\gamma(G[H])= \Gamma(G[H])\geq 3\alpha(G),
$$
a contradiction with $\gamma_t(G)\leq 2\alpha(G)$. We conclude that $G[H]$ is not well-dominated whenever $\gamma(H)\geq 3$.
This completes the proof.
\end{proof}

\section{Well-dominated graphs with small domination number}\label{sec:small-gamma}

Theorem \ref{thm:Well-dominated Lex} motivates the following question: What are the well-dominated graphs with domination number two? We address this question by giving a characterization of such graphs. We first recall some basic terminology. A {\em clique} in a graph is a set of pairwise adjacent vertices. A clique is {\em maximal} if it is not contained in any larger clique. A {\em triangle} in a graph $G$ is a clique of size three. A graph is said to be {\em triangle-free} if it has no triangles.

Since every maximal independent set in a graph is a minimal dominating set, a well-dominated graph with $\gamma(G)=2$ is also well-covered with $\alpha(G) = 2$. The following simple lemma characterizes well-covered graphs with $\alpha(G) = 2$.

\begin{lemma}\label{lem:well-covered-2}
A graph $G$ is well-covered with $\alpha(G) = 2$ if and only if its complement,
$\overline{G}$, is a triangle-free graph without isolated vertices.
\end{lemma}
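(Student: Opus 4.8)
The plan is to prove both directions of the biconditional by translating the conditions "well-covered" and "$\alpha(G)=2$" into structural statements about the complement $\overline{G}$. The key observation driving everything is the standard fact that independent sets in $G$ are exactly cliques in $\overline{G}$, so maximal independent sets in $G$ correspond precisely to maximal cliques in $\overline{G}$. Under this correspondence, $\alpha(G)=2$ says the largest clique in $\overline{G}$ has size $2$, i.e.\ $\overline{G}$ has an edge but no triangle; and well-coveredness with $\alpha(G)=2$ says every maximal clique of $\overline{G}$ has size exactly $2$.

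First I would establish the forward direction. Assume $G$ is well-covered with $\alpha(G)=2$. Since $\alpha(G)=2$, the graph $\overline{G}$ has a clique of size $2$ (hence at least one edge) but no clique of size $3$, so $\overline{G}$ is triangle-free. It remains to rule out isolated vertices in $\overline{G}$. An isolated vertex $v$ in $\overline{G}$ is a universal vertex in $G$, so $\{v\}$ would be a maximal independent set in $G$ of size $1$; since $G$ also has a maximal independent set of size $\alpha(G)=2$, this contradicts well-coveredness. Hence $\overline{G}$ has no isolated vertices, as required.

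For the converse, assume $\overline{G}$ is triangle-free without isolated vertices. Triangle-freeness means $\overline{G}$ has no clique of size $3$, and the absence of isolated vertices means every vertex lies in some edge, i.e.\ in a clique of size $2$; together these say every maximal clique of $\overline{G}$ has size exactly $2$. Translating back to $G$, every maximal independent set of $G$ has size $2$, which is exactly the statement that $G$ is well-covered with $\alpha(G)=2$.

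This lemma is essentially a dictionary translation, so I expect no serious obstacle; the only point requiring a moment's care is the boundary behavior of small cliques, namely correctly handling why forbidding isolated vertices in $\overline{G}$ rules out size-$1$ maximal independent sets in $G$, and why the presence of at least one edge (guaranteeing $\alpha(G)\ge 2$) together with triangle-freeness (guaranteeing $\alpha(G)\le 2$) pins the independence number to exactly $2$ rather than merely bounding it. Once the clique–independent-set correspondence is stated cleanly, both implications follow by directly reading off clique sizes.
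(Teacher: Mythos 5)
Your proof is correct and follows essentially the same route as the paper's: both translate maximal independent sets in $G$ into maximal cliques in $\overline{G}$ and observe that ``all maximal cliques have size two'' is equivalent to $\overline{G}$ being triangle-free with no isolated vertices. Your version simply spells out the details (the universal-vertex argument for the forward direction, and why a size-$1$ maximal clique forces an isolated vertex) that the paper's two-sentence proof leaves implicit.
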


\begin{proof}
A graph $G$ is well-covered with $\alpha(G) = 2$ if and only if all maximal cliques of $\overline{G}$ are of size two. This condition is equivalent to
$\overline{G}$ being triangle-free and without isolated vertices.
\end{proof}

To state the characterization of well-dominated graphs with $\gamma = 2$, we need to introduce some more notation. For a subset $X$ of the vertex set of a graph $G$, we denote
by $\overline{N[X]}$ the set $V(G)\setminus N[X]$. For two graphs $G$ and $H$, we say that a set $S\subseteq V(G)$ {\em induces an $H$} if the subgraph of $G$ induced by $S$ is isomorphic to $H$.

\begin{theorem}\label{thm:characterization gamma=2}
A graph $G$ is well-dominated with $\gamma(G) = 2$ if and only if the following conditions hold:
\begin{enumerate}[(i)]
\item $\overline{G}$ is a triangle-free graph without isolated vertices.
\item For every two triangles $T$ and $T'$ in $G$ such that $T\cup T'$ induces a $\overline{C_6}$, the set $T \cup \overline{N[T']}$ is not a dominating set in $G$.
\end{enumerate}
\end{theorem}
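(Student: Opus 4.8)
The plan is to isolate the combinatorial heart of the statement as a single equivalence and then read off both implications. First I would record that, by Lemma~\ref{lem:well-covered-2}, condition $(i)$ is nothing but the assertion that $G$ is well-covered with $\alpha(G)=2$; since $\overline{G}$ has no isolated vertex, $G$ has no universal vertex, so $\gamma(G)\ge 2$, while any maximal independent set has size $2$ and dominates, so $\gamma(G)=2$. Thus, assuming $(i)$, the graph $G$ is well-dominated exactly when it has no minimal dominating set of size $\ge 3$ (there are none of size $1$). The whole theorem therefore reduces to proving that, under $(i)$, such a large minimal dominating set exists if and only if $(ii)$ fails. Throughout I would use the elementary but crucial fact that, when $\overline{G}$ is triangle-free, every non-edge $\{p,q\}$ of $G$ is already a dominating set: a vertex missed by $\{p,q\}$ would be $\overline{G}$-adjacent to both $p$ and $q$, creating a triangle in $\overline{G}$.

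For the direction ``large minimal dominating set $\Rightarrow$ $(ii)$ fails'', let $D$ be a minimal dominating set with $|D|\ge 3$. The fact above forces $D$ to be a clique, for otherwise a non-edge inside $D$ would be a strictly smaller dominating set. By Proposition~\ref{fact:irreducible} each $u\in D$ then has an external private closed neighbor $x_u\notin D$ with $x_u$ adjacent to $u$ and to no other vertex of $D$; using a third vertex of $D$ and the triangle-freeness of $\overline{G}$ one checks that the distinct vertices $x_u$ are pairwise adjacent. Choosing any three vertices $u_1,u_2,u_3\in D$ and their private neighbors, the sets $T=\{u_1,u_2,u_3\}$ and $T'=\{x_1,x_2,x_3\}$ induce a $\overline{C_6}$ (two triangles joined by the matching $u_ix_i$, all other cross pairs being non-edges). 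Every remaining vertex of $D$ is non-adjacent to $x_1,x_2,x_3$, hence lies in $\overline{N[T']}$, so $T\cup\overline{N[T']}\supseteq D$ is dominating and $(ii)$ fails.

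For the converse I would start from a $\overline{C_6}$ induced by triangles $T=\{a,b,c\}$ and $T'=\{a',b',c'\}$ (matching $aa',bb',cc'$) with $T\cup\overline{N[T']}$ dominating, and manufacture a minimal dominating set of size $\ge 3$. The key computations, all powered by triangle-freeness of $\overline{G}$, are: every $w\in\overline{N[T']}$ is adjacent to each of $a,b,c$ (since $w$ is $\overline{G}$-adjacent to $a',b',c'$, while each of $a',b',c'$ is $\overline{G}$-adjacent to two of $a,b,c$), and $\overline{N[T']}$ is a clique. Consequently $D^{*}:=T\cup\overline{N[T']}$ is a dominating clique in which $a',b',c'$ remain private neighbors of $a,b,c$. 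Since inside $D^{*}$ the only vertex dominating $a'$ is $a$ (and likewise for $b',c'$), every dominating subset of $D^{*}$ must contain all of $T$; shrinking $D^{*}$ to a minimal dominating set thus yields one containing $T$, of size $\ge 3$, so $G$ is not well-dominated.

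Assembling the two implications gives the theorem: if $G$ is well-dominated with $\gamma(G)=2$ then it is well-covered with $\alpha(G)=2$ (hence $(i)$) and has no minimal dominating set of size $\ge 3$ (hence $(ii)$), while $(i)$ and $(ii)$ together force $\gamma(G)=2$ and rule out minimal dominating sets of sizes $1$ and $\ge 3$, making all of them of size $2$. I expect the main obstacle to be the converse direction---specifically, verifying that $T\cup\overline{N[T']}$ is a clique and that $T$ is forced into every one of its dominating subsets; this is exactly the place where both the precise prism adjacencies and the triangle-freeness of $\overline{G}$ must be exploited in tandem.
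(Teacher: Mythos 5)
Your proof is correct and follows essentially the same route as the paper's: Lemma~\ref{lem:well-covered-2} disposes of condition $(i)$, private closed neighbors of three vertices of a size-$\ge 3$ minimal dominating set $D$ build the induced $\overline{C_6}$ with $D\subseteq T\cup\overline{N[T']}$, and the matching between $T$ and $T'$ forces all of $T$ into any dominating subset of $T\cup\overline{N[T']}$. The only differences are cosmetic: you get that $D$ is a clique from the observation that every non-edge of $G$ dominates (the paper instead exhibits an independent triple contradicting $\alpha(G)=2$, which is the same use of triangle-freeness of $\overline{G}$), and your verifications in the converse that $\overline{N[T']}$ is a clique adjacent to all of $T$ are correct but superfluous, since the privacy of $a',b',c'$ with respect to $T\cup\overline{N[T']}$ already follows from the definition of $\overline{N[T']}$ together with the prism adjacencies.
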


\begin{proof}
First, we establish necessity of the two conditions. Let $G$ be a well-dominated graph with $\gamma(G) = 2$. Condition $(i)$ follows from Lemma~\ref{lem:well-covered-2}.
Now, consider a pair of triangles $T$ and $T'$ in $G$ such that $T\cup T'$ induces a $\overline{C_6}$.
Suppose for a contradiction that $D =T \cup \overline{N[T']}$ is a dominating set in $G$. Then there exists a minimal dominating set $D'$ such that $D' \subseteq D$. Since the vertices of $T'$ cannot be dominated by $\overline{N[T']}$, they have to be dominated by the vertices of $T$. However, since $T\cup T'$ induces a $\overline{C_6}$, no vertex of $T$ dominates two vertices of $T'$, and therefore all the three vertices of $T$ must be in $D'$. Therefore, $|D'|\ge 3$, contradicting the assumption that $G$ is well-dominated with $\gamma(G)=2$.

\smallskip
Now, we establish sufficiency. Suppose that $G$ is a graph satisfying conditions $(i)$ and $(ii)$.
By Lemma~\ref{lem:well-covered-2}, condition $(i)$ implies that $G$ is well-covered with $\alpha(G) = 2$.
Since $\overline{G}$ does not have any isolated vertices, $G$ does not have any universal vertices, thus
$\gamma(G)\ge 2$. This inequality, combined with the inequality $\gamma(G)\le \alpha(G)$ and $\alpha(G) = 2$, implies $\gamma(G) = 2$.
Suppose for a contradiction that $G$ is not well-dominated. Then, $G$ contains a minimal dominating set $D$ of size at least $3$.
Let $a,b,c \in D$ be three distinct vertices in $D$. Since $D$ is minimal, each one of $a$, $b$, and $c$ has a $D$-private closed neighbor, say $a'$, $b'$, and $c'$, respectively, where the three vertices $a'$, $b'$, and $c'$ are pairwise distinct.
Let $T = \{a,b,c\}$ and $T' = \{a',b',c'\}$. Note that for every $t\in T$, if its $D$-private closed neighbor $t'$ is in $T$, then $t' = t$.
Next, observe that $T$ is a triangle in $G$ since if vertices $a$ and $b$ were non-adjacent (say), then
$\{a,b,c'\}$ would be an independent set of size $3$ in $G$, contradicting $\alpha(G) = 2$.
The fact that $T$ is a triangle and the definition of $T'$ imply that $D\cap T' = \emptyset$; in particular, $T\cap T' = \emptyset$.
A similar argument as the one applied earlier to $T$ shows that $T'$ is a triangle. Therefore, $T\cup T'$ induces a $\overline{C_6}$. It now suffices to show that $D\subseteq T \cup \overline{N[T']}$, as this will imply that $T \cup \overline{N[T']}$ is a dominating set in $G$ and
contradict condition $(ii)$.
Let $v\in D$. Clearly, if $v\in T$, then $v\in T \cup \overline{N[T']}$.
So let $v\in D\setminus T$. Suppose that
$v\not\in (T \cup \overline{N[T']})$.
Then $v\in N(T')\setminus T$, that is, there exists a vertex in $T'$, say (w.l.o.g.) $a'$, such that $a'$ is adjacent to $v$. However, this is impossible since $a'$ is a $D$-private closed neighbor of $a$, cannot be adjacent to $v\in D\setminus\{a\}$.
This completes the proof.
\end{proof}

To the best of our knowledge, the computational complexity of recognizing well-dominated graphs is open. Since conditions $(i)$ and $(ii)$ in Theorem~\ref{thm:characterization gamma=2} are polynomially testable, Theorem~\ref{thm:characterization gamma=2} implies the following partial result.

\begin{corollary}
The problem of recognizing well-dominated graphs can be solved in polynomial time in the class of graphs
$G$ with $\gamma(G) = 2$.
\end{corollary}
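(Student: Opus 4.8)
The plan is to observe that this corollary is an immediate consequence of the characterization in Theorem~\ref{thm:characterization gamma=2}, once we verify that both conditions $(i)$ and $(ii)$ appearing there can be tested in polynomial time. Given a graph $G$ from the class under consideration (so $\gamma(G)=2$), the algorithm simply checks conditions $(i)$ and $(ii)$ and declares $G$ well-dominated precisely when both of them hold; correctness is then exactly the statement of Theorem~\ref{thm:characterization gamma=2}. Thus the entire mathematical content has already been established, and what remains is purely an algorithmic bookkeeping argument.

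For condition $(i)$ I would first form the complement $\overline{G}$ in time $O(n^2)$, then test whether $\overline{G}$ is triangle-free---for instance by examining all $\binom{n}{3}$ triples of vertices, or via fast matrix multiplication---and whether $\overline{G}$ has an isolated vertex, equivalently whether $G$ has a universal vertex, which is a trivial scan of the degree sequence. All of this is clearly polynomial.

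The slightly more delicate point is the polynomial testability of condition $(ii)$, since it is phrased as a quantification over pairs of triangles. The key observation is that a graph on $n$ vertices has at most $\binom{n}{3}=O(n^3)$ triangles, all of which can be enumerated in polynomial time; hence there are at most $O(n^6)$ ordered pairs $(T,T')$ of triangles to consider. For each such pair one checks in constant time whether the six vertices of $T\cup T'$ are distinct and induce a $\overline{C_6}$, and for the pairs passing this test one computes $\overline{N[T']}=V(G)\setminus N[T']$ and verifies, in time $O(n^2)$, whether $T\cup\overline{N[T']}$ dominates $G$. Condition $(ii)$ fails exactly when some candidate set of this form is dominating, so the whole test runs in polynomial time.

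Since no genuine mathematical obstacle arises, the only thing to be careful about is ensuring that condition $(ii)$ does not secretly hide an exponential search; this is precisely what the polynomial bound on the number of triangles rules out, and is therefore the step I would present most explicitly. If one does not wish to assume the promise $\gamma(G)=2$, this equality can itself be checked in polynomial time, since $\gamma(G)\le 2$ amounts to the existence of a dominating pair of vertices, testable over all $O(n^2)$ pairs, while $\gamma(G)\ge 2$ amounts to the absence of a universal vertex.
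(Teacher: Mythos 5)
Your proposal is correct and follows exactly the paper's route: the paper derives the corollary from Theorem~\ref{thm:characterization gamma=2} with the single remark that conditions $(i)$ and $(ii)$ are polynomially testable, and your argument simply makes that remark explicit (triangle-freeness of $\overline{G}$, no universal vertex in $G$, and the $O(n^6)$ enumeration of triangle pairs with a domination check for each). The added detail on bounding the number of triangle pairs is a sound and welcome elaboration of what the paper leaves implicit.
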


More generally, we now argue that for every fixed $k$, the problem of recognizing well-dominated graphs can be solved in polynomial time in the class of
graphs $\{G: \gamma(G)=k\}$. We first recall some terminology related to hypergraphs (see, e.g.,~\cite{MR1013569}). A \emph{hypergraph} $\mathcal{H}$ is a pair $(V,\cal{E})$ where $V=V({\cal H})$ is a finite set of vertices and $\mathcal{E}=E({\cal H})$ is a set of subsets of $V$, called \emph{hyperedges}. A vertex set $X\subseteq V$ is called a \emph{transversal} of $\mathcal{H}$ if $X$ intersects every hyperedge of $\mathcal{H}$, and it is called a \emph{minimal transversal} if it is a transversal that does not properly contain any other transversal.
Let $\mathcal{H}^*$ denote the hypergraph with vertex set $V({\cal H})$ having as hyperedges exactly the minimal transversals of $\mathcal{H}$.
A hypergraph is said to be \emph{Sperner} (or: a \emph{clutter}) if no hyperedge of $\mathcal{H}$ contains another hyperedge.

The {\sc Hypergraph Transversal} problem is the decision problem that takes as input two Sperner hypergraphs ${\cal H}$ and ${\cal H}'$
and asks whether ${\cal H}' = {\cal H}^*$. This is a well studied problem whose computational complexity status is a notorious open problem. As shown by~\cite{MR1417667}, the problem admits a quasi-polynomial-time solution (an algorithm running in time $n^{o(\log n)}$ where $n$ is the total input size). Moreover, several special cases have been shown to be solvable in polynomial time. For our purpose, polynomial-time solvability of the following special case will be useful, shown by~\cite{MR1361157} and by~\cite{MR1754735} (in the equivalent context of dualization of monotone Boolean functions):

\begin{theorem}[\cite{MR1361157},~\cite{MR1754735}]\label{thm:hyp-trans-const-k}
For every positive integer $k$,  the {\sc Hypergraph Transversal} problem is solvable in polynomial time if all hyperedges of one of the two hypergraphs ${\mathcal H}$ and ${\mathcal H}'$ are of size at most $k$.
\end{theorem}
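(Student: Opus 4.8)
The plan is to give a polynomial-time algorithm that decides, for Sperner hypergraphs $\mathcal{H}$ and $\mathcal{H}'$ one of which has all hyperedges of size at most $k$, whether $\mathcal{H}'=\mathcal{H}^*$. First I would exploit the duality symmetry of Sperner hypergraphs: since $\mathcal{H}^{**}=\mathcal{H}$ for every Sperner hypergraph, we have $\mathcal{H}'=\mathcal{H}^*$ if and only if $\mathcal{H}=(\mathcal{H}')^*$, so by renaming we may assume that it is $\mathcal{H}$ whose hyperedges all have size at most $k$. Next I would split the test into two parts. The easy part is to check, in polynomial time, that every hyperedge $E'\in\mathcal{H}'$ is genuinely a minimal transversal of $\mathcal{H}$: that $E'$ meets every hyperedge of $\mathcal{H}$, and that for each $v\in E'$ some $E\in\mathcal{H}$ satisfies $E\cap E'=\{v\}$ (which certifies that $E'\setminus\{v\}$ is no longer a transversal). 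If this fails I output $\mathcal{H}'\neq\mathcal{H}^*$; otherwise the inclusion $\mathcal{H}'\subseteq\mathcal{H}^*$ is established, and the remaining task is the completeness question: does some minimal transversal of $\mathcal{H}$ lie outside $\mathcal{H}'$?

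To handle completeness I would first record a one-line equivalence that follows from the inclusion $\mathcal{H}'\subseteq\mathcal{H}^*$: there is a minimal transversal of $\mathcal{H}$ outside $\mathcal{H}'$ if and only if there is a set $S\subseteq V$ that is a transversal of $\mathcal{H}$ and contains no hyperedge of $\mathcal{H}'$ (any minimal transversal contained in such an $S$ is a new one, and conversely a new minimal transversal is itself such an $S$). The point of this reformulation is that it turns completeness into the problem of \emph{generating} minimal transversals. I would then organize the generation as a flashlight (binary-partition) search: fixing an ordering of $V$ and maintaining two disjoint sets $P$ (vertices forced into the transversal) and $Q$ (vertices forced out), at each step branch on whether the next vertex is placed in $P$ or in $Q$, descending into a branch only when the current pair $(P,Q)$ can still be completed to a minimal transversal. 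Provided this extension test runs in polynomial time, the scheme lists the minimal transversals of $\mathcal{H}$ with polynomial delay. Crucially, since the input already contains $\mathcal{H}'$, I never need to list all of them: I run the generation and compare each produced set with $\mathcal{H}'$. If it terminates having produced exactly the members of $\mathcal{H}'$, then $\mathcal{H}'=\mathcal{H}^*$; and if $\mathcal{H}^*$ were strictly larger, then among any $|\mathcal{H}'|+1$ distinct minimal transversals at least one is outside $\mathcal{H}'$ by pigeonhole, so within the first $|\mathcal{H}'|+1$ outputs I would detect a witness. In both cases only polynomially many outputs, hence polynomially many steps, are required.

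The crux is the polynomial-time extension test, and this is exactly where the size bound $k$ on the hyperedges of $\mathcal{H}$ becomes indispensable. Given disjoint $P,Q\subseteq V$, I must decide whether there is a minimal transversal $T$ of $\mathcal{H}$ with $P\subseteq T\subseteq V\setminus Q$. Two necessary conditions are immediate: no hyperedge of $\mathcal{H}$ is contained in $Q$ (so that a transversal avoiding $Q$ exists at all), and each $v\in P$ lies in some hyperedge $E$ with $E\cap P=\{v\}$ (so that $v$ has a chance to be irredundant in $T$). The substantive part is to certify that such private hyperedges can be secured \emph{simultaneously} while every hyperedge not yet met by $P$ is covered by the remaining available vertices. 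Here the hypothesis $|E|\le k$ makes all constraints local: securing a private hyperedge for $v$ forbids only the at most $k-1$ other vertices of that hyperedge, and covering an uncovered hyperedge is a choice among at most $k$ vertices. The plan is to resolve these interacting local constraints by bounded branching equipped with a progress measure that strictly decreases at every branching, so that the search tree stays polynomial. I expect this extension step to be the main obstacle: in the absence of the size bound no polynomial extension test is known and dualization as a whole is only known to admit a quasi-polynomial algorithm, so the entire weight of the theorem rests on converting the bounded hyperedge size into a genuinely polynomial bound on this local search, which is precisely the technical contribution of the cited works.
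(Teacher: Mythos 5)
You should first note that the paper does not prove this statement at all: it is imported verbatim from the literature, cited to Eiter and Gottlob and to Boros, Gurvich, and Hammer, so there is no in-paper proof to compare against and your attempt stands or falls on its own. Your outer scaffolding is fine and standard: the self-duality renaming via $\mathcal{H}^{**}=\mathcal{H}$, the polynomial soundness check that $\mathcal{H}'\subseteq\mathcal{H}^*$ (each $E'$ hits every edge and each $v\in E'$ has a private edge $E$ with $E\cap E'=\{v\}$), the reformulation of completeness as the existence of a transversal containing no member of $\mathcal{H}'$, and the pigeonhole early-stopping after $|\mathcal{H}'|+1$ outputs of a polynomial-delay generator are all correct, \emph{conditional} on a polynomial-time extendability oracle.

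The gap is that this conditional step is not merely unproven but false as you state it: deciding whether there is a minimal transversal $T$ with $P\subseteq T\subseteq V\setminus Q$ is NP-complete already for $k=2$ and $Q=\emptyset$, so bounded edge size cannot make your oracle polynomial. Concretely, for a graph the minimal transversals are exactly the minimal vertex covers, and a minimal vertex cover containing $P$ exists if and only if the graph has a maximal independent set disjoint from $P$; reducing from SAT by taking one edge $x_i\overline{x}_i$ per variable and one vertex $c_j\in P$ per clause, joined to the vertices of its literals, shows this problem is NP-complete (a satisfying assignment corresponds precisely to a maximal independent set of true literals avoiding all $c_j$). This hardness is essentially the main negative result of the very Boros--Gurvich--Hammer paper cited for the theorem (their ``dual subimplicant'' problem), so no branching rule with a ``strictly decreasing progress measure'' can rescue the flashlight scheme unless P $=$ NP --- the obstruction is the oracle itself, not the tree accounting (and, as a secondary point, a decreasing measure with bounded branching would in any case bound only the depth, not the tree size; the correct accounting charges internal nodes to emitted solutions). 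The actual proofs in the cited works therefore proceed differently: they exploit the bounded edge size in the \emph{decision} problem directly, given the full candidate list $\mathcal{H}'$, rather than through exact extension testing; to salvage your plan you would need to replace the exact oracle by a relaxed test that still keeps the search tree polynomial, and that replacement is precisely where the content of the theorem lies.
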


Theorem~\ref{thm:hyp-trans-const-k} has the following consequence:

\begin{corollary}\label{cor:hyp-trans-const-k}
For every positive integer $k$, the following problem is solvable in polynomial time:
Given a Sperner hypergraph $\mathcal{H}$, determine whether all minimal transversals of $\mathcal{H}$ are of size $k$.
\end{corollary}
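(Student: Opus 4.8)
The plan is to reduce the problem to a single instance of the {\sc Hypergraph Transversal} problem to which Theorem~\ref{thm:hyp-trans-const-k} applies. The guiding observation is that the property ``all minimal transversals of $\mathcal{H}$ have size $k$'' can be reformulated as an equality of hypergraphs. Let $\mathcal{H}'$ be the hypergraph on the vertex set $V(\mathcal{H})$ whose hyperedges are exactly the minimal transversals of $\mathcal{H}$ of size $k$. Since every hyperedge of $\mathcal{H}'$ is by definition a minimal transversal of $\mathcal{H}$, we always have $E(\mathcal{H}')\subseteq E(\mathcal{H}^*)$, and equality $\mathcal{H}'=\mathcal{H}^*$ holds precisely when no minimal transversal of $\mathcal{H}$ has size different from $k$. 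Thus deciding our property is equivalent to deciding whether $\mathcal{H}'=\mathcal{H}^*$.

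First I would compute $\mathcal{H}'$ explicitly. Writing $n=|V(\mathcal{H})|$, there are $\binom{n}{k}=O(n^k)$ subsets of $V(\mathcal{H})$ of size $k$, which is polynomial in $n$ for fixed $k$. For each such subset $X$, one can test in polynomial time whether $X$ is a transversal of $\mathcal{H}$ (by checking that $X\cap E\neq\emptyset$ for every hyperedge $E$ of $\mathcal{H}$) and whether it is minimal (by checking, for each $x\in X$, that $X\setminus\{x\}$ fails to be a transversal). Collecting all $k$-element minimal transversals produced in this way yields $\mathcal{H}'$ in polynomial time. Note that $\mathcal{H}'$ is automatically Sperner, since all of its hyperedges have the same size $k$ and are pairwise distinct, so none contains another; and $\mathcal{H}$ is Sperner by assumption. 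Hence the pair $(\mathcal{H},\mathcal{H}')$ is a legitimate input to the {\sc Hypergraph Transversal} problem, and all hyperedges of $\mathcal{H}'$ have size at most $k$.

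Finally I would invoke Theorem~\ref{thm:hyp-trans-const-k} on $(\mathcal{H},\mathcal{H}')$, which guarantees that the equality $\mathcal{H}'=\mathcal{H}^*$ can be tested in polynomial time, and by the reformulation above this answers the original question. There is essentially no deep obstacle in this argument; the only points requiring care are verifying the reformulation $\mathcal{H}'=\mathcal{H}^*$ and confirming that $\mathcal{H}'$ is Sperner so that the theorem applies. I would also note the degenerate case: if $\mathcal{H}$ admits no transversal, then $\mathcal{H}^*$ is empty and so is $\mathcal{H}'$, in which case the reformulation still holds vacuously (all zero minimal transversals have size $k$), so the equivalence remains correct throughout.
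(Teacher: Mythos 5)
Your proposal is correct and follows essentially the same route as the paper's proof: enumerate all $O(|V(\mathcal{H})|^k)$ subsets of size $k$, keep the minimal transversals among them to form $\mathcal{H}'$, and then test $\mathcal{H}'=\mathcal{H}^*$ via Theorem~\ref{thm:hyp-trans-const-k}, which applies because all hyperedges of $\mathcal{H}'$ have size at most $k$. Your additional checks (that $\mathcal{H}'$ is Sperner, the exact equivalence between $\mathcal{H}'=\mathcal{H}^*$ and the stated property, and the degenerate case with no transversals) are correct details that the paper leaves implicit.
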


\begin{proof}
We proceed as follows: first, we generate all $O(|V({\mathcal{H}})|^k)$ subsets of size $k$ of $V({\mathcal{H}})$ and test for each of them whether it is a minimal transversal of ${\mathcal{H}}$; this way, we obtain a hypergraph ${\mathcal H}'$.
The problem now becomes that of testing whether ${\mathcal H}' = {\mathcal H}^*$.
Since all hyperedges of ${\mathcal H}'$ are of size $k$, Theorem~\ref{thm:hyp-trans-const-k} implies that
the problem is indeed polynomially solvable.
\end{proof}

The announced result about the recognition of well-dominated graphs with small domination number can now be derived from Corollary~\ref{cor:hyp-trans-const-k}.

\begin{theorem}\label{thm:const-k}
For every positive integer $k$, the problem of recognizing well-dominated graphs can be solved in polynomial time in the class of graphs
$G$ with $\gamma(G) = k$.
\end{theorem}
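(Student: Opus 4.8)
The plan is to reduce the recognition of well-dominated graphs within the class $\{G:\gamma(G)=k\}$ to the hypergraph problem solved in Corollary~\ref{cor:hyp-trans-const-k}, by way of the \emph{closed neighborhood hypergraph}. Given a graph $G$, I would first form the hypergraph $\mathcal{H}_0$ on vertex set $V(G)$ whose hyperedges are the closed neighborhoods $N[v]$, $v\in V(G)$. By the very definition of a dominating set (namely $D\cap N[v]\neq\emptyset$ for every $v$), the transversals of $\mathcal{H}_0$ are exactly the dominating sets of $G$, and hence the \emph{minimal} transversals of $\mathcal{H}_0$ are exactly the minimal dominating sets of $G$.

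The hypergraph $\mathcal{H}_0$ need not be Sperner, since one closed neighborhood may contain another, so the next step is to pass to the Sperner hypergraph $\mathcal{H}$ obtained from $\mathcal{H}_0$ by keeping only the inclusion-minimal members of $\{N[v]:v\in V(G)\}$, which can be done in polynomial time. A set is a transversal of a hypergraph if and only if it is a transversal of the subhypergraph consisting of its inclusion-minimal hyperedges, because hitting a set automatically hits every superset of that set. Consequently $\mathcal{H}$ and $\mathcal{H}_0$ have the same transversals, and therefore the same minimal transversals. In particular, the minimal transversals of the Sperner hypergraph $\mathcal{H}$ are precisely the minimal dominating sets of $G$.

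It then remains to phrase well-dominatedness as a statement about transversal sizes. Since the input graph lies in the class $\{G:\gamma(G)=k\}$ (a membership that, for fixed $k$, can in any case be verified by enumerating vertex subsets of size at most $k$), the minimum size of a transversal of $\mathcal{H}$ equals $\gamma(G)=k$. Hence $G$ is well-dominated, i.e.\ $\Gamma(G)=\gamma(G)=k$, if and only if \emph{every} minimal transversal of $\mathcal{H}$ has size $k$. This is exactly the decision problem of Corollary~\ref{cor:hyp-trans-const-k} applied to $\mathcal{H}$ with the given constant $k$, which is solvable in polynomial time. Chaining the three steps together yields the desired polynomial-time recognition algorithm.

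The routine parts (building $\mathcal{H}_0$, performing the Sperner reduction, and verifying the transversal/dominating-set dictionary) are straightforward; the substantive work is imported rather than reproved here, namely Corollary~\ref{cor:hyp-trans-const-k}, which in turn rests on the polynomial-time dualization result of Theorem~\ref{thm:hyp-trans-const-k}. Thus the main obstacle is conceptual rather than technical: one must recognize that the constraint ``all minimal transversals have size $k$'' (which, given $\gamma(G)=k$, encodes well-dominatedness) is precisely the tractable special case of the \textsc{Hypergraph Transversal} problem in which the generated hypergraph ${\mathcal H}'$ has all hyperedges of bounded size.
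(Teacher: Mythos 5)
Your proposal is correct and follows essentially the same route as the paper's own proof: form the Sperner hypergraph of inclusion-minimal closed neighborhoods, observe that its minimal transversals are exactly the minimal dominating sets of $G$, reduce well-dominatedness to the condition that all minimal transversals have size $k$, and invoke Corollary~\ref{cor:hyp-trans-const-k}. Your only addition is to spell out explicitly why discarding non-minimal closed neighborhoods preserves the (minimal) transversals, a detail the paper leaves implicit.
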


\begin{proof}
Let $G=(V,E)$ be a graph with $\gamma(G) = k$. Consider the hypergraph $\mathcal{H}_G=(V,\mathcal{E})$, where $\mathcal{E}$ contains the inclusion-minimal elements of $\{N[v]: v\in V\}$. Observe that $\mathcal{H}_G$ is Sperner and that the minimal transversals of ${\cal H}_G$ are exactly the
minimal dominating sets of $G$. It follows that $G$ is well-dominated if and only if all minimal transversals of ${\cal H}_G$
are of size $k$. By Corollary~\ref{cor:hyp-trans-const-k}, this condition can be tested in polynomial time.
\end{proof}

\section{Concluding remarks}\label{sec:conclusion}

We introduced in this paper the notion of an irreducible dominating set, a variant of dominating set generalizing both minimal dominating  and minimal total dominating sets. The main application of this notion was a characterization of the minimal dominating sets in nontrivial lexicographic product graphs, which led to a complete characterization of nontrivial lexicographic product graphs that are well-dominated.

We believe that the notions studied in this paper deserve to be investigated further. In particular, we feel it would be interesting to develop a better understanding of the structure of irreducible dominating sets in general graphs, which might lead to further applications of this notion.
For example, since every minimal dominating set as well as every minimal total dominating set in a graph $G$ is
an irreducible dominating set, the following problem naturally arises:

\begin{problem}
Characterize the graphs $G$ such that every irreducible dominating set in $G$ is either
a minimal dominating set or a minimal total dominating set.
\end{problem}

Another related question is that of determining an expression for the upper domination number (the maximum size of a minimal dominating set) of a lexicographic product graph in terms of parameters of its factors. Furthermore, does a similar approach as the one used in this paper lead to characterizations of minimal ``dominating'' sets in lexicographic product graphs with respect to other types of domination? For example, a characterization of minimal total dominating sets in the nontrivial lexicographic product graphs might lead to a characterization of lexicographic product graphs that are well-totally-dominated, where a graph without isolated vertices
is said to be {\it well-totally-dominated} if all its minimal total dominating sets are of the same size (\cite{MR1605080}).

\begin{sloppypar}
\begin{problem}
Characterize the nontrivial lexicographic product graphs that are well-totally-dominated.
\end{problem}
\end{sloppypar}

Finally, let us remark that, to the best of our knowledge, the computational complexity of recognizing well-dominated graphs is in general still open.

\acknowledgements
\label{sec:ack}

The authors are grateful to an anonymous referee for helpful remarks.

\nocite{*}


\begin{thebibliography}{27}\label{sec:biblio}
\providecommand{\natexlab}[1]{#1}
\providecommand{\url}[1]{\texttt{#1}}
\expandafter\ifx\csname urlstyle\endcsname\relax
  \providecommand{\doi}[1]{doi: #1}\else
  \providecommand{\doi}{doi: \begingroup \urlstyle{rm}\Url}\fi

\bibitem[Berge(1989)]{MR1013569}
C.~Berge.
\newblock \emph{Hypergraphs}, volume~45 of \emph{North-Holland Mathematical
  Library}.
\newblock North-Holland Publishing Co., Amsterdam, 1989.

\bibitem[Bollob{\'a}s and Cockayne(1979)]{bollobas1979graph}
B.~Bollob{\'a}s and E.~J. Cockayne.
\newblock Graph-theoretic parameters concerning domination, independence, and
  irredundance.
\newblock \emph{Journal of Graph Theory}, 3\penalty0 (3):\penalty0 241--249,
  1979.
\newblock URL \url{http://dx.doi.org/10.1002/jgt.3190030306}.

\bibitem[Boros et~al.(1998)Boros, Gurvich, and Hammer]{MR1754735}
E.~Boros, V.~Gurvich, and P.~L. Hammer.
\newblock Dual subimplicants of positive {B}oolean functions.
\newblock \emph{Optim. Methods Softw.}, 10\penalty0 (2):\penalty0 147--156,
  1998.
\newblock URL \url{http://dx.doi.org/10.1080/10556789808805708}.

\bibitem[Boros et~al.(2001)Boros, Gurvich, Khachiyan, and Makino]{MR1856568}
E.~Boros, V.~Gurvich, L.~Khachiyan, and K.~Makino.
\newblock Dual-bounded generating problems: partial and multiple transversals
  of a hypergraph.
\newblock \emph{SIAM J. Comput.}, 30\penalty0 (6):\penalty0 2036--2050, 2001.
\newblock URL \url{http://dx.doi.org/10.1137/S0097539700370072}.

\bibitem[Bre{\v{s}}ar et~al.(2012)Bre{\v{s}}ar, Dorbec, Goddard, Hartnell,
  Henning, Klav{\v{z}}ar, et~al.]{brevsar2012vizing}
B.~Bre{\v{s}}ar, P.~Dorbec, W.~Goddard, B.~L. Hartnell, M.~A. Henning,
  S.~Klav{\v{z}}ar, et~al.
\newblock Vizing's conjecture: a survey and recent results.
\newblock \emph{Journal of Graph Theory}, 69\penalty0 (1):\penalty0 46--76,
  2012.
\newblock URL \url{http://dx.doi.org/10.1002/jgt.20565}.

\bibitem[Chv\'atal and Slater(1993)]{MR1217991}
V.~Chv\'atal and P.~J. Slater.
\newblock A note on well-covered graphs.
\newblock In \emph{Quo Vadis, Graph Theory?}, volume~55 of \emph{Ann. Discrete
  Math.}, pages 179--181. North-Holland, Amsterdam, 1993.
\newblock URL \url{http://dx.doi.org/10.1016/S0167-5060(08)70387-X}.

\bibitem[Eiter and Gottlob(1995)]{MR1361157}
T.~Eiter and G.~Gottlob.
\newblock Identifying the minimal transversals of a hypergraph and related
  problems.
\newblock \emph{SIAM J. Comput.}, 24\penalty0 (6):\penalty0 1278--1304, 1995.
\newblock URL \url{http://dx.doi.org/10.1137/S0097539793250299}.

\bibitem[Finbow et~al.(1988)Finbow, Hartnell, and Nowakowski]{finbow1988well}
A.~Finbow, B.~Hartnell, and R.~Nowakowski.
\newblock Well-dominated graphs: a collection of well-covered ones.
\newblock \emph{Ars Combin}, 25:\penalty0 5--10, 1988.

\bibitem[Finbow and van Bommel(2015)]{finbow2015triangulations}
S.~Finbow and C.~M. van Bommel.
\newblock Triangulations and equality in the domination chain.
\newblock \emph{Discrete Applied Mathematics}, 194:\penalty0 81--92, 2015.
\newblock URL \url{http://dx.doi.org/10.1016/j.dam.2015.05.025}.

\bibitem[Fredman and Khachiyan(1996)]{MR1417667}
M.~L. Fredman and L.~Khachiyan.
\newblock On the complexity of dualization of monotone disjunctive normal
  forms.
\newblock \emph{J. Algorithms}, 21\penalty0 (3):\penalty0 618--628, 1996.
\newblock URL \url{http://dx.doi.org/10.1006/jagm.1996.0062}.

\bibitem[Gionet et~al.(2011)Gionet, King, and Sha]{gionet2011revision}
T.~J. Gionet, E.~L. King, and Y.~Sha.
\newblock A revision and extension of results on 4-regular, 4-connected,
  claw-free graphs.
\newblock \emph{Discrete Applied Mathematics}, 159\penalty0 (12):\penalty0
  1225--1230, 2011.
\newblock URL \url{http://dx.doi.org/10.1016/j.dam.2011.04.013}.

\bibitem[Hammack et~al.(2011)Hammack, Imrich, and Klav{\v{z}}ar]{MR2817074}
R.~Hammack, W.~Imrich, and S.~Klav{\v{z}}ar.
\newblock \emph{Handbook of {P}roduct {G}raphs}.
\newblock Discrete Mathematics and its Applications (Boca Raton). CRC Press,
  Boca Raton, FL, second edition, 2011.

\bibitem[Hartnell and Rall(1997)]{MR1605080}
B.~Hartnell and D.~F. Rall.
\newblock On graphs in which every minimal total dominating set is minimum.
\newblock In \emph{Proceedings of the {T}wenty-eighth {S}outheastern
  {I}nternational {C}onference on {C}ombinatorics, {G}raph {T}heory and
  {C}omputing ({B}oca {R}aton, {FL}, 1997)}, volume 123, pages 109--117, 1997.

\bibitem[Hartnell(1999)]{MR1677797}
B.~L. Hartnell.
\newblock Well-covered graphs.
\newblock \emph{J. Combin. Math. Combin. Comput.}, 29:\penalty0 107--115, 1999.

\bibitem[Haynes et~al.(1998{\natexlab{a}})Haynes, Hedetniemi, and
  Slater]{MR1605684}
T.~W. Haynes, S.~T. Hedetniemi, and P.~J. Slater.
\newblock \emph{Fundamentals of {D}omination in {G}raphs}, volume 208 of
  \emph{Monographs and Textbooks in Pure and Applied Mathematics}.
\newblock Marcel Dekker, Inc., New York, 1998{\natexlab{a}}.

\bibitem[Haynes et~al.(1998{\natexlab{b}})Haynes, Hedetniemi, and
  Slater]{MR1605685}
T.~W. Haynes, S.~T. Hedetniemi, and P.~J. Slater, editors.
\newblock \emph{Domination in {G}raphs: Advanced {T}opics}, volume 209 of
  \emph{Monographs and Textbooks in Pure and Applied Mathematics}.
\newblock Marcel Dekker, Inc., New York, 1998{\natexlab{b}}.

\bibitem[Kratsch and Stewart(1997)]{MR1474143}
D.~Kratsch and L.~Stewart.
\newblock Total domination and transformation.
\newblock \emph{Inform. Process. Lett.}, 63\penalty0 (3):\penalty0 167--170,
  1997.
\newblock URL \url{http://dx.doi.org/10.1016/S0020-0190(97)00115-4}.

\bibitem[Levit and Tankus(2017)]{MR3648208}
V.~E. Levit and D.~Tankus.
\newblock Well-dominated graphs without cycles of lengths 4 and 5.
\newblock \emph{Discrete Math.}, 340\penalty0 (8):\penalty0 1793--1801, 2017.
\newblock URL \url{http://dx.doi.org/10.1016/j.disc.2017.02.021}.

\bibitem[Nowakowski and Rall(1996)]{MR1429806}
R.~J. Nowakowski and D.~F. Rall.
\newblock Associative graph products and their independence, domination and
  coloring numbers.
\newblock \emph{Discuss. Math. Graph Theory}, 16\penalty0 (1):\penalty0 53--79,
  1996.
\newblock URL \url{http://dx.doi.org/10.7151/dmgt.1023}.

\bibitem[Plummer(1993)]{MR1254158}
M.~D. Plummer.
\newblock Well-covered graphs: a survey.
\newblock \emph{Quaestiones Math.}, 16\penalty0 (3):\penalty0 253--287, 1993.

\bibitem[Sankaranarayana and Stewart(1992)]{MR1161178}
R.~S. Sankaranarayana and L.~K. Stewart.
\newblock Complexity results for well-covered graphs.
\newblock \emph{Networks}, 22\penalty0 (3):\penalty0 247--262, 1992.
\newblock URL \url{http://dx.doi.org/10.1002/net.3230220304}.

\bibitem[Sitthiwirattham(2013)]{Sitthiwirattham}
T.~Sitthiwirattham.
\newblock Domination in lexicographic product of complete graphs.
\newblock \emph{Int. J. Pure Appl. Math..}, 85:\penalty0 745--750, 2013.

\bibitem[{\v{S}}umenjak et~al.(2012){\v{S}}umenjak, Pavli{\v{c}}, and
  Tepeh]{vsumenjak2012roman}
T.~K. {\v{S}}umenjak, P.~Pavli{\v{c}}, and A.~Tepeh.
\newblock On the {R}oman domination in the lexicographic product of graphs.
\newblock \emph{Discrete Applied Mathematics}, 160\penalty0 (13):\penalty0
  2030--2036, 2012.
\newblock URL \url{http://dx.doi.org/10.1016/j.dam.2012.04.008}.

\bibitem[{\v{S}}umenjak et~al.(2013){\v{S}}umenjak, Rall, and
  Tepeh]{vsumenjak2013rainbow}
T.~K. {\v{S}}umenjak, D.~F. Rall, and A.~Tepeh.
\newblock Rainbow domination in the lexicographic product of graphs.
\newblock \emph{Discrete Applied Mathematics}, 161\penalty0 (13):\penalty0
  2133--2141, 2013.
\newblock URL \url{http://dx.doi.org/10.1016/j.dam.2013.03.011}.

\bibitem[Topp and Volkmann(1990)]{topp1990well}
J.~Topp and L.~Volkmann.
\newblock Well covered and well dominated block graphs and unicyclic graphs.
\newblock \emph{Mathematica Pannonica}, 1\penalty0 (2):\penalty0 55--66, 1990.

\bibitem[West(1996)]{MR1367739}
D.~B. West.
\newblock \emph{Introduction to {G}raph {T}heory}.
\newblock Prentice Hall, Inc., Upper Saddle River, NJ, 1996.

\bibitem[Zhang et~al.(2011)Zhang, Liu, and Meng]{MR2828067}
X.~Zhang, J.~Liu, and J.~Meng.
\newblock Domination in lexicographic product graphs.
\newblock \emph{Ars Combin.}, 101:\penalty0 251--256, 2011.
\end{thebibliography}
\end{document}